\newcommand{\monthyear}[1]{%
  \def\@monthyear{\uppercase{#1}}}
\newcommand{\volnumber}[1]{%
  \def\@volnumber{\uppercase{#1}}}
\def\ps@plain{\ps@empty
  \def\@oddfoot{\@monthyear \hfil \thepage}%
  \def\@evenfoot{\thepage \hfil \@volnumber}}
\def\ps@firstpage{\ps@plain}
\def\ps@headings{\ps@empty
  \def\@evenhead{%
    \setTrue{runhead}%
    \def\thanks{\protect\thanks@warning}%
    \uppercase{}\hfil}%%%%%%%%%%%%%
  \def\@oddhead{%
    \setTrue{runhead}%
    \def\thanks{\protect\thanks@warning}%
    \hfill\uppercase{Fibonacci-like sequences for variants of the Tower of Hanoi}}%
  \let\@mkboth\markboth
  \def\@evenfoot{%
    \thepage \hfil \@volnumber}%
  \def\@oddfoot{%
    \@monthyear \hfil \thepage}%
  }%
\newcommand{\N}{{\mathbb N}}
\newcommand{\Z}{{\mathbb Z}}
\newcommand{\card}{{\rm card}}
\theoremstyle{plain}
\numberwithin{equation}{section}
\newtheorem{thm}{Theorem}[section]
\newtheorem{theorem}[thm]{Theorem}
\newtheorem{definition}[thm]{Definition}
\newtheorem{corollary}[thm]{Corollary}
\begin{document}
%% replace the values in the next three lines by the correct information
\monthyear{}%%% Month, Year
\volnumber{}%%%Volume, Number
\setcounter{page}{1}

\title{Fibonacci-like Sequences for variants of the Tower of Hanoi, and corresponding graphs and Gray Codes}
\author{Beno\^{\i}t Rittaud}
\address{Laboratoire Analyse, Géométrie et Applications (CNRS UMR 7539)\\
                Institut Galil\'ee\\
                Universit\'e Sorbonne Paris Nord\\
                93430 Villetaneuse, France}
\email{rittaud@math.univ-paris13.fr}
%\thanks{Research supported in part by the Natural Sciences and Engineering Research Council of Canada and by Emperor Frederick II of Sicily.}

\begin{abstract}
We modify the rules of the classical Tower of Hanoi puzzle in a quite natural way to get the Fibonacci sequence involved in the optimal algorithm of resolution, and show some nice properties of such a variant. In particular, we deduce from this {\em Tower of Hanoi-Fibonacci} a Gray-like code on the set of binary words without the factor $11$, which has some  properties intersting for itself and from which an iterative algorithm for the Tower of Hanoi-Fibonacci is obtained. Such an algorithm involves the Fibonacci substitution. Eventually, we briefly extend the study to some natural generalizations.\end{abstract}

\maketitle

The Tower of Hanoi is a puzzle invented by Édouard Lucas \cite{Lucas1884, Lucas1883} in which a set of $n$ disks of different radius from 1 to $n$ are put on a peg $A$ in decreasing order, thus materializing a triangular tower. Two other pegs, $B$ and $C$, are empty. The aim of the game is to move all the disks on the peg $C$ (or, in a roughly equivalent version, on either $B$ and $C$), following the two rules: (1) the disks are moved one at a time, taking a disk on the top of a peg and putting it on the top of another peg, and (2) a disk cannot be put over a smaller disk. This is what we will call the {\em classical} Tower of Hanoi puzzle. A set-theoretic description of it is the following. We write either $d_k$ or $k$ for the disk of radius $k$, and $\Delta_k=\{1,\ldots, k\}$ for the set of the $k$ smallest disks (with $\Delta_k=\varnothing$ for $k<1$). Any state of the puzzle corresponds to an ordered $3$-partition of $\Delta_n$, written as $(A,B,C)$ and referred as a {\em state} of the puzzle (also referred as {\em regular state} in the litterature, when it is necessary to emphasize on the fact that disks on each peg has to be set up with decreasing size, an assumption that is weakened in some studies). A move from such a state to another one, say $(A',B',C')$, is allowed if, and only if, the two ordered partitions are equal up to some $d\in\Delta_n$ such that $d\in\{\min(A),\min(B),\min(C)\}\cap\{\min(A'),\min(B'),\min(C')\}$.

A lot of variants of the puzzle has been proposed since Lucas' orginal one. We report the reader to the highly valuable book \cite{Hinz2018} for a general synthesis on the subject.

Lucas already understood that the Tower of Hanoi was deeply linked to numeration systems. Indeed, he wrote in 1893 \cite[p. 58]{Lucas1893} that

\begin{quote}Increasing the number of pegs and slightly modifying the rule of the game would easily provide representations of all numeration systems. [{\em En augmentant le nombre de tiges et en modifiant légèrement les règles du jeu, on obtiendrait facilement des représentations de tous les systèmes de numération.}]
\end{quote}

The optimal algorithm to solve the puzzle with $n$ disks requires $2^n-1$ moves (hence passes through $2^n$ states), and the total number of admissible states is $3^n$. (There exists a ``worst'' algorithm that solves the puzzle  passing through all these $3^n$ state exactly once). Hence, it is not a surprise that there are natural links between the Tower of hanoi and binary and ternary numeration systems. At Lucas' time, only integral numeration systems were known. Since the ``worst'' solution (i.e. the solution that pass through all possible states) requires $3^n-1$ moves, it is sensible to ask for more pegs to represent other numeration systems. But now that non-integer numeration systems are known, we can give to Lucas' sentence a new meaning, keeping the three initial pegs and modifying the rules of the game, to get a Tower of Hanoi version of some nonconventional ways to write integers.

A first possibility consists in restricting the moves allowed between pegs. For example, we can forbid any direct move from $A$ to $C$ and from $C$ to $A$. It is well-known that this constraint leads to the ``worst'' algorithm mentioned beforehand, that visits every possible state of the puzzle among the $3^n$ ones. Each of the possible variants of this kind is linked to some numeration system (as well as to some Gray code) defined by a linear recurring sequence (see \cite{RittaudRome}). 

The initial question that gave rise to the present article was the converse: find natural alternative rules for the Tower of Hanoi such that the minimal number of moves required to solve the puzzle with $n$ disks corresponds to a sequence fixed {\em a priori}. One of these sequences for which an answer can be found is the Fibonacci sequence, and we will shw that the answer described here extends to some other linear recurring sequences as well. 

Apart from the sequence of minimal moves, other links between the Tower of Hanoi and the Fibonacci sequence can be made. In particular, it is shown in \cite{Hinz2019} that, for the classical puzzle with $n$ disks, the number of {\em key states} (i.e. for which the minimal number of moves to reach $(\varnothing,\varnothing,\Delta_n)$ is exactly twice the minimal number of moves to reach $(\Delta_n,\varnothing,\varnothing)$) is equal to $F_{n-1}$. The same article mentions the following other result, due to Merryfield and published in \cite{Bennish}: for any $n$, the set of distincts $A_k$ (resp. $B_k$, $C_k$) attained during the optimal algorithm of resolution of the standard puzzle is of cardinality $F_{n+2}$ (resp. $F_{n+1}$, $F_{n+2}$).

The present paper is organized as follows. In Section \ref{HanoiClassique}, we recall the relevant properties of the classical Tower of Hanoi we wish to generalize. Section \ref{HanoiFibo} is devoted to the main variant we are interested in. In this variant is defined {\em Fibonacci moves}. We prove that the variant defined by these kind of moves, the {\em Tower of Hanoi-Fibonacci}, is optimally solved in a number of moves essentially given by the Fibonacci sequence (Section \ref{Zeck}). Then, we provide a link with the classical Zeckendorf-Fibonacci numeration system (Section \ref{Zeck}) and deduce from it an iterative algorithm for the Tower of Hanoi-Fibonacci. We then study a Gray-like code associated to this numeration system (Section \ref{Gray}), then investigate the general properties of the graph associated to the puzzle (Section \ref{Graph}). Eventually, in Section \ref{Gene} are briefly investigated some generalizations and questions, in two directions. The first one is when the definition of Fibonacci moves is modified so as to get an optimal algorithm that requires a number of moves given by a linear recurring sequence of the form $m_n=m_{n-p}+m_{n-q}+1$. The second one considers complementary restrictions on moves between pegs, that gives rise to a Tribonacci-like sequence.

\section{The classical Tower of Hanoi}\label{HanoiClassique}

In the following, a subset $\{d_{k_1},\ldots, d_{k_i}\}$ of $\Delta_n$ with $k_1<\cdots<k_i$ is simply written $k_1\cdots k_i$.

 It is known since Lucas that the Tower of Hanoi has a solution for any  $n\geqslant 0$, and that there is a unique solution with minimal number $m_n$ of moves. Such a solution can be described recursively by the following decomposition, valid for any $n\geqslant 1$, from which we easily deduce that $m_n=2m_{n-1}+1$, hence $m_n=2^n-1$ (since $m_0=0$) .
 \[(\Delta_n,\varnothing,\varnothing)\stackrel{m_{n-1}}{\longrightarrow}(n,\Delta_{n-1},\varnothing)\stackrel{1}{\longrightarrow}(\varnothing,\Delta_{n-1}, n)\stackrel{m_{n-1}}{\longrightarrow}(\varnothing,\varnothing,\Delta_n).\]
 
Since the number of states during the optimal resolution of the puzzle is $1+m_n=2^n$, it is natural to consider the binary expansion of length exactly $n$ to code the successive states from $0$ to $2^n-1$. It is easily proved by induction that the index $k$ (between $1$ to $n$) of the leftmost changing digit from the binary expansion of $i$ to the binary expansion of $i+1$ corresponds to the disk which is moved when going from the state $i$ to the state $i+1$. As a consequence, we have that, for any $1\leqslant k\leqslant n$, the number of times the disk $d_k$ is moved is $2^{n-k}$.
 
Also, consider the alternative codage of the states of the puzzle with $n$ disks by elements of $\{0,1\}^n$ given by the following rules: the initial state $(\Delta_n,\varnothing,\varnothing)$ is labelled $0^n$, and when we go from partition $(A,B,C)$ to $(A',B',C')$ by moving the disk $d=d_k$, the label of the state $(A',B',C')$ is defined as the label of $(A,B,C)$ in which the $k$-th digit has been switched (that is: this digit becomes a $0$ if it was a $1$ and a $1$ if it was a $0$). Then, an induction shows that the sequence of codages of the successive states thus obtained coincides with the {\em reflected binary Gray code}, that is: the list $\mathcal{G}_n$ of all binary words with exactly $n$ letters defined recursively by $\mathcal{G}_0=\{0\}$ and $\mathcal{G}_n=0\mathcal{G}_{n-1}+1\overline{\mathcal{G}_{n-1}}$ (where, for a sequence $\mathcal{L}=\{x_1,\ldots, x_k\}$ of words and $d$ a letter, $d\mathcal{L}=\{dx_1,\ldots, dx_n\}$ and, with $\mathcal{L}'=\{y_1,\ldots, y_\ell\}$, the notation $\mathcal{L}+\mathcal{L}'$ stands for $\{x_1,\ldots, x_k,y_1,\ldots y_\ell\}$). The fundamental property of such a list $\mathcal{G}_n$ is that any two consecutive elements of the list differ by exactly one digit.

For a given state, when only the partition is known but not the order of its elements, we write the partition as $X\sqcup Y\sqcup Z$. An expression like $X\sqcup Y\sqcup Z\longrightarrow R\sqcup S\sqcup T$ means a move (or a sequence of moves) in which the position of the element $R$ (resp. $S$, $T$) of the final partition is the same as the position of the element $X$ (resp. $Y$, $Z$) in the initial partition. It was observed in \cite{Hinz1992} that the graph $\mathcal{H}_n=(V_n,E_n)$ of the classical Tower of Hanoi has a fractal structure similar to the Sierpi\'nski triangle, $\mathcal{H}_{n}$ being made of three copies of $\mathcal{H}_{n-1}$ for any $n\geqslant 1$, any two of these copies being linked by a single edge corresponding to a move of the form $n\sqcup\Delta_{n-1}\sqcup\varnothing\longrightarrow\varnothing\sqcup\Delta_{n-1}\sqcup n$ (see Figure \ref{GrapheClassique}).

\begin{figure}
\centering\includegraphics[height=9cm]{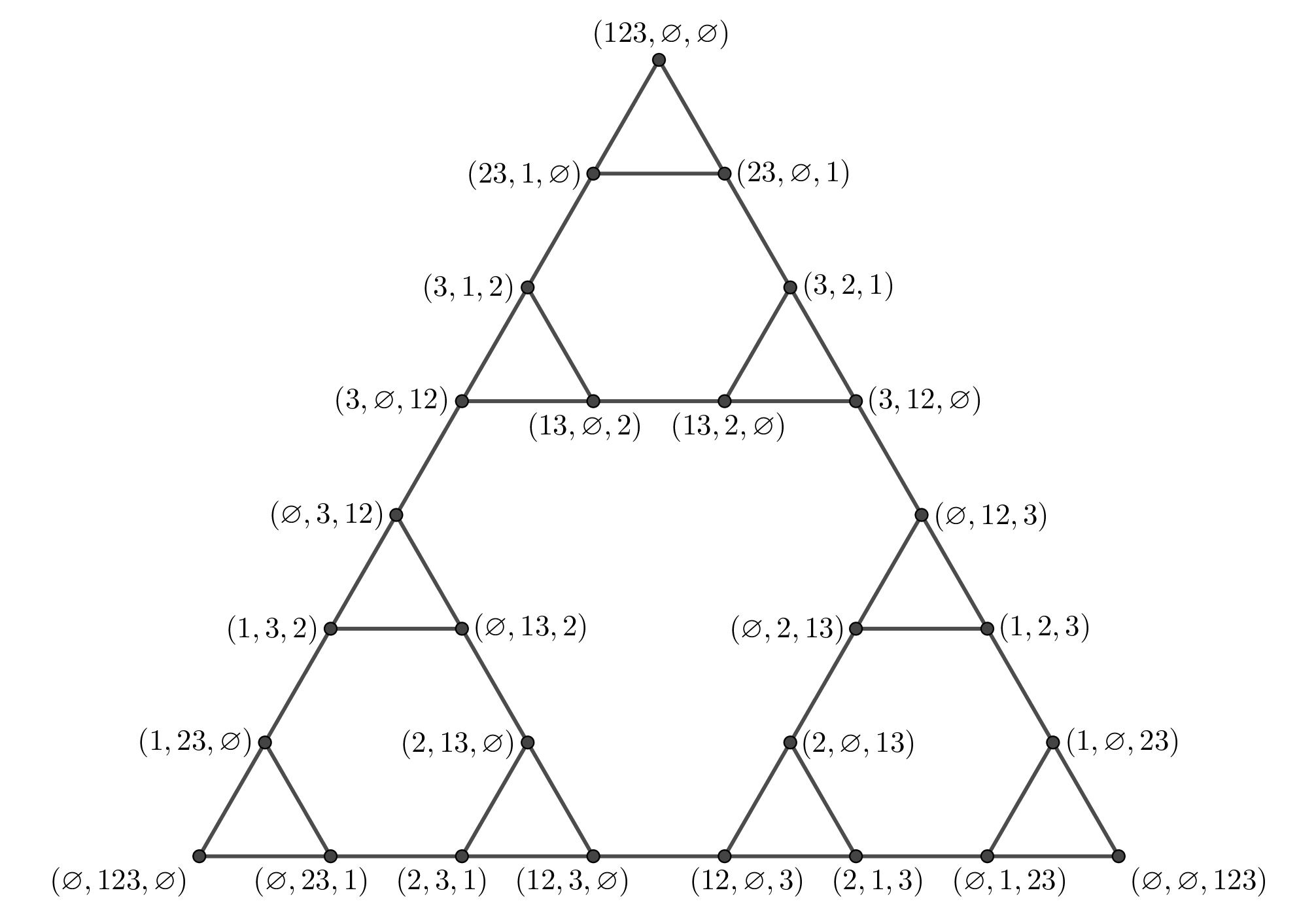}
\begin{caption}{The graph $\mathcal{H}_3$ of the classical Tower of Hanoi with $3$ disks.}\end{caption}
\label{GrapheClassique}
\end{figure}

Eventually, the optimal solution of the puzzle can be described by the following algorithm: move $d_1$ (always in the way $A\rightarrow B\rightarrow C\rightarrow A$ if $n$ is odd, and in the way $A\rightarrow C\rightarrow B \rightarrow A$ if $n$ is even), then, while there is a disk $d_k\neq d_1$ that can be moved, move that disk then then move again $d_1$.

\section{The Tower of Hanoi-Fibonacci}\label{HanoiFibo}

\subsection{Definition and optimal algorithm}\label{DefEtAlgo}

\begin{definition}\label{FibonacciRule} Let $X$ and $Y$ be two different pegs such that, for some $k\in\Delta_n$, we have $X=k\tilde{X}$ and $Y=\Delta_{k-1}\tilde{Y}$. Write $Z$ for the third peg. We define a {\em $k$-Fibonacci move} as a move that consists in putting simultaneously both $k-1$ and $k$ onto $Z$, i.e.:
\[k\tilde{X}\sqcup \Delta_{k-1}\tilde{Y}\sqcup Z\longrightarrow \tilde{X}\sqcup \Delta_{k-2}\tilde{Y}\sqcup (k-1)kZ.\]

A {\em Fibonacci move} is a $k$-Fibonacci move for some $k$.  The {\em Tower of Hanoi-Fibonacci} is the Tower of Hanoi puzzle in which only Fibonacci moves are allowed. (Note that this definition will be slightly modified in Section \ref{Graph}.)
\end{definition}

\begin{figure}
\centering\includegraphics[height=1.8cm]{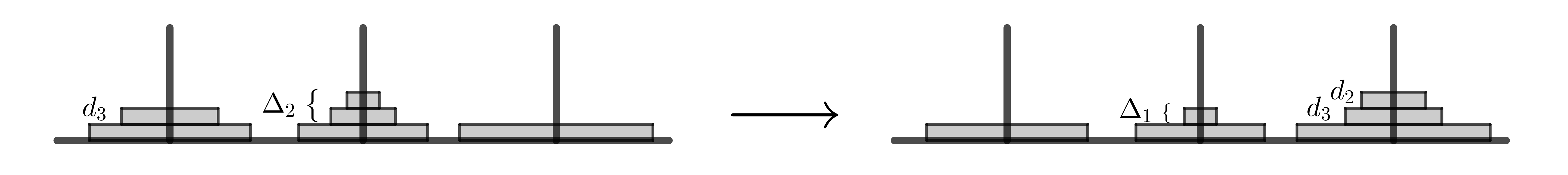}
\begin{caption}{The $3$-Fibonacci move from $(35,\Delta_24,6)$ to $(5,\Delta_14,236)$.}\end{caption}
\label{FibMove}
\end{figure}

Note that the $1$-Fibonacci move is the one in which only one disk is moved (the disk $d_1$). Hence, this move is the only one common to the Tower of Hanoi-Fibonacci and the classical Tower of Hanoi.

%In the following, we write $(f_n)_{n\geqslant 0}$ for the Fibonacci sequence defined as $f_0=1$, $f_1=2$ and $f_n=f_{n-1}+f_{n-2}$ for $n\geqslant 2$.

\begin{theorem}\label{OptimalFibo} The Tower of Hanoi-Fibonacci with $n$ disks admits a solution for any $n\geqslant 0$. There is only one optimal algorithm for it, that needs exactly $F_{n+2}-1$ Fibonacci moves (hence passes through $F_{n+2}$ different states).
\end{theorem}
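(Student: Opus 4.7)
The plan is to prove the theorem by strong induction on $n$, with base cases $n=0$ (no move is required, and $F_2-1=0$) and $n=1$ (the unique solution is the single $1$-Fibonacci move carrying $d_1$ to the target peg, and $F_3-1=1$). The inductive step will rest on establishing the recursion $m_n=m_{n-1}+1+m_{n-2}$, whence the induction hypothesis gives $m_n=(F_{n+1}-1)+1+(F_n-1)=F_{n+2}-1$.

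For the upper bound $m_n\leq m_{n-1}+1+m_{n-2}$ I would carry out the natural three-phase algorithm: (i) move $\Delta_{n-1}$ from $A$ to $B$ via the optimal $(n-1)$-disk Hanoi-Fibonacci algorithm, with $d_n$ sitting inertly at the bottom of $A$ (being too large, it cannot play the role of $d_k$ nor belong to $\Delta_{k-1}$ for any $k\leq n-1$, so the sub-puzzle's legal moves coincide with those of the full puzzle restricted to $\Delta_{n-1}$); (ii) perform the $n$-Fibonacci move $(n,\Delta_{n-1},\varnothing)\to(\varnothing,\Delta_{n-2},(n-1)n)$; (iii) move $\Delta_{n-2}$ from $B$ to $C$ via the optimal $(n-2)$-disk algorithm, with $d_{n-1}$ and $d_n$ similarly inert at the bottom of $C$. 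By the induction hypothesis the total cost is $m_{n-1}+1+m_{n-2}$.

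For the lower bound and uniqueness, given any solution $\sigma$ I would partition its moves around the times $T_1<\cdots<T_q$ at which an $n$-Fibonacci move occurs ($q\geq 1$, since $d_n$ must move and only an $n$-Fibonacci move relocates it). Let $P$ be the prefix before $T_1$, $G_i$ the gap between $T_i$ and $T_{i+1}$, and $S$ the suffix after $T_q$. Three estimates then yield the bound. First, $|P|\geq m_{n-1}$: immediately before $T_1$, $d_n$ is on top of $A$ (so $\Delta_{n-1}$ has been entirely removed from $A$), and the $n$-Fibonacci rule further forces $\Delta_{n-1}$ to be assembled on top of a single other peg, so $P$ carries out an $(n-1)$-disk Hanoi-Fibonacci to which the induction hypothesis applies. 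Second, $|G_i|\geq 1$: immediately after any $n$-Fibonacci move, $d_{n-1}$ sits directly atop $d_n$, so $d_n$ is not on top of any peg and at least one further move is required to re-expose it. Third, $|S|\geq m_{n-2}$: immediately after $T_q$ the state is $(\varnothing,\Delta_{n-2},(n-1)n)$ (up to swapping $A$ and $B$) and $d_n$ never moves again, so $S$ is a path in the $(n-1)$-disk puzzle (with $d_n$ inert) from the state where $\Delta_{n-2}$ sits on $B$ and $d_{n-1}$ alone on $C$ to $(\varnothing,\varnothing,\Delta_{n-1})$, whose length I claim is at least $m_{n-2}$. Summing, $|\sigma|\geq m_{n-1}+(2q-1)+m_{n-2}$, minimized at $q=1$ to match the upper bound. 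Uniqueness drops out: optimality forces $q=1$ together with equality in the prefix and suffix estimates, and the induction hypothesis supplies uniqueness of each optimal sub-algorithm.

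The main obstacle is the suffix estimate $|S|\geq m_{n-2}$: $d_{n-1}$ is not automatically inert, since an $(n-1)$-Fibonacci move could displace it before a subsequent $(n-1)$-Fibonacci move brings it back, and therefore the suffix is not transparently a sub-instance of a smaller Hanoi-Fibonacci puzzle. I plan to handle this through an auxiliary lemma proved in the same inductive framework, this time splitting the suffix at its $(n-1)$-Fibonacci moves: if none occur, the problem reduces outright to the $(n-2)$-disk Hanoi-Fibonacci of $\Delta_{n-2}$ from $B$ to $C$; otherwise $d_{n-1}$ performs a round trip comprising at least two $(n-1)$-Fibonacci moves, whose setup and teardown force a costly re-assembly of $\Delta_{n-2}$ on an intermediate peg that strictly exceeds the cost of the straight-line $(n-2)$-disk solution. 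Formalizing this round-trip bookkeeping, via a nested induction on $n$, is the real content of the argument.
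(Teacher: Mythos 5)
Your proposal follows essentially the same route as the paper: the same three-phase decomposition $(\Delta_n,\varnothing,\varnothing)\to(n,\Delta_{n-1},\varnothing)\to(\varnothing,\Delta_{n-2},(n-1)n)\to(\varnothing,\varnothing,\Delta_n)$, the same recursion $m_n=m_{n-1}+1+m_{n-2}$, and the same Fibonacci initial conditions. In fact the paper's own proof is only the informal induction: it asserts that $d_n$ ``can be asked'' to move once and that the prefix and suffix cost $m_{n-1}$ and $m_{n-2}$, without addressing either of the two points you isolate (a solution might move $d_n$ several times; $d_{n-1}$ is not automatically inert after the $n$-Fibonacci move, since an $(n-1)$-Fibonacci move could lift it off $d_n$). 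So your lower-bound bookkeeping (splitting at the $n$-Fibonacci moves, $|P|\geqslant m_{n-1}$, $|G_i|\geqslant 1$, $|S|\geqslant m_{n-2}$, minimization at $q=1$, and uniqueness from the forced intermediate state plus the induction hypothesis) is more careful than the published argument, and every step you actually assert is correct. The only caveat is the piece you yourself defer: the suffix lemma in the case where $d_{n-1}$ makes a round trip. Your instinct that this is ``the real content'' is right, and be warned that the crudest count does not close it: two extra $(n-1)$-Fibonacci moves plus the $m_{n-3}$ needed after the last of them gives $2+m_{n-3}$, which is smaller than $m_{n-2}=m_{n-3}+1+m_{n-4}$ once $n\geqslant 6$, so the nested induction must also charge for reassembling $\Delta_{n-2}$ (and clearing $d_{n-2}$ off $d_{n-1}$) between the two $(n-1)$-moves. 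That said, this is a gap relative to full rigor, not relative to the paper, which leaves the same point entirely implicit.
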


As an example, here is the optimal solution in the case $n=5$. 
\[(\Delta_5,\varnothing,\varnothing)\longrightarrow(2345,\varnothing, 1)\longrightarrow(345, 12,\varnothing)\longrightarrow (45, 1, 23)\longrightarrow\]
\[(45,\varnothing, 123)\longrightarrow(5, 34, 12)\longrightarrow (15, 34, 2)\longrightarrow(5, 1234,\varnothing)\longrightarrow\]
\[(\varnothing,123, 45)\longrightarrow(\varnothing, 23, 145) \longrightarrow(12, 3, 45)\longrightarrow(1,\varnothing, 2345)\longrightarrow(\varnothing,\varnothing,\Delta_5)\]

\begin{proof} The proof is very similar to the classical case. First, the cases $n=0$ and $n=1$ both admit trivial solutions, with $m_0=0=F_2-1$ and $m_1=1=F_3-1$, where $m_n$ stands for the minimal number of moves to solve the puzzle with $n$ disks. Now, put $n\geqslant 2$ and assume that the puzzle with $n-1$ and $n-2$ disks are both solvable, with $m_{n-2}=F_{n}-1$ and $m_{n-1}=F_{n+1}-1$. Consider the puzzle with $n$ disks. To be moved, the disk of radius $n$ needs to be alone on its peg, and needs the tower $\Delta_{n-1}$ to be on another peg. To mimimize the number of moves, we can ask $d_n$ to be moved only once, hence any solution of the puzzle needs to reach the state $(n,\Delta_{n-1},\varnothing)$. We thus get a recursive description of the optimal solution of the puzzle with $n$ disks:
\[(\Delta_n,\varnothing,\varnothing)\stackrel{m_{n-1}}{\longrightarrow}(n,\Delta_{n-1},\varnothing)\stackrel{1}{\longrightarrow}(\varnothing,\Delta_{n-2}, (n-1)n)\stackrel{m_{n-2}}{\longrightarrow}(\varnothing,\varnothing,\Delta_n).\]

Hence, we have that $m_n=m_{n-1}+1+m_{n-2}$, so, by the induction hypothesis, $m_n=(F_{n+1}-1)+1+(F_{n}-1)=F_{n+2}-1$, as required.\end{proof}

\subsection{Link with the Zeckendorf-Fibonacci numeration system}\label{Zeck}

The classical link between binary numeration system and the standard Tower of Hanoi  extends to the {\em Zeckendorf} (or {\em Zeckendorf-Fibonacci}) {\em numeration system} and the Tower of Hanoi-Fibonacci. This link will provide us an iterative algorithm for the optimal solution of the latter puzzle.

Recall that, as proved in \cite{Zeckendorf}, for any fixed $n\geqslant 2$ and any integer $0<k<F_{n+1}$ there exists a unique finite sequence $(u_i)_{2\leqslant i\leqslant n}\in\{0,1\}^{n-1}$ such that $u_iu_{i+1}=0$ for any $i$ and ${\displaystyle k=\sum_{2\leqslant i\leqslant n} u_iF_{i}}$. Such a sequence is the {\em Zeckendorf-Fibonacci expansion} of $k$. We will also write $[u_{n}\cdots u_2]_F$ for it, and $Z((u_i)_{2\leqslant i\leqslant n})$ for the corresponding $k$. By convention, we may define $Z(0)$ as the empty sequence. When we need the length of a Zeckendorf-Fibonacci expansion to be of a certain kind (as in the following theorem), we allow ourselves to append leading $0$s to it, hence considering $[0u_n\cdots u_2]_F$ as equivalent to $[u_n\cdots u_2]_F$.

In the following, a {\em ZF-sequence} (or {\em ZF-word}) will denote any binary sequence (or binary word) satisfying the property that it does not contains two successive $1$ anywhere in its terms. Two ZF-words like $u_n\cdots u_2$ and $0u_n\cdots u_2$ will be regarded as equivalents. Under this equivalence relation, the Zeckendorf-Fibonacci expansion of $k$ is unique. Moreover, this expansion defines a bijection from $\N^*$ onto the set of (non-empty) ZF-sequences.

The Zeckendorf-Fibonacci expansion of  $k>0$ can be obtained by the application of the following algorithm:

\begin{itemize}

\item $r:=k$, $u_i:=0$ for all $i\geqslant 2$, $n:=\max(j\geqslant 2\ :\ F_j\leqslant k)$ 

\item while $r>0$:

\indent\indent $i:=\max(j\ :\ F_j\leqslant r)$

\indent\indent $u_i:=1$

\indent\indent $r:=r-F_i$

\item return$((u_i)_{2\leqslant i\leqslant n})$.

\end{itemize}

\begin{theorem}\label{NumerationFibo} Let $n\geqslant 0$ be some integer, let $0<k<F_{n+2}$, and let $k-1=[u_{n+1}\cdots u_2]_F$ and $k=[v_{n+1}\cdots v_2]_F$. Let $j$ be the biggest index such that $u_j\neq v_j$. The $k$-th move of the Tower of Hanoi-Fibonacci with $n$ disks is a $(j-1)$-Fibonacci move.
\end{theorem}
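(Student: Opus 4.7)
The plan is to proceed by strong induction on $n$, using the recursive decomposition of the optimal algorithm established in Theorem \ref{OptimalFibo}. The base cases $n=0$ (vacuous, since $0<k<F_2=1$ has no solution) and $n=1$ (only $k=1$: here $k-1=0$ has the empty expansion and $k=1=F_2$ has $v_2=1$, giving $j=2$, and the unique move of the $1$-disk puzzle is indeed a $1$-Fibonacci move) are immediate. For the inductive step, recall that the optimal algorithm for $n$ disks consists of three phases: moves $1$ to $F_{n+1}-1$ which, ignoring the idle disk $d_n$ at the bottom of $A$, coincide move-for-move with the optimal algorithm on the $(n-1)$-disk subpuzzle; then the central $n$-Fibonacci move of index $F_{n+1}$; then moves $F_{n+1}+1$ to $F_{n+2}-1$ which, ignoring the idle $(n-1)n$ block on $C$, coincide with the optimal algorithm on the $(n-2)$-disk subpuzzle shifted by $F_{n+1}$.

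I would then split into three cases matching these three phases. If $1\leqslant k\leqslant F_{n+1}-1$, then $k-1<k<F_{n+1}$, so both expansions have $u_{n+1}=v_{n+1}=0$, and their length-$(n-1)$ restrictions are exactly the Zeckendorf-Fibonacci expansions of $k-1$ and $k$ used for the $(n-1)$-disk puzzle; the biggest differing index is the same, so the induction hypothesis on $n-1$ gives the conclusion. If $k=F_{n+1}$, then $v_{n+1}=1$ while $k-1=F_{n+1}-1$ satisfies $u_{n+1}=0$ (because by iterating $F_{m+1}-1 = F_m + (F_{m-1}-1)$ one obtains the explicit Zeckendorf-Fibonacci expansion $F_{n+1}-1 = F_n+F_{n-2}+F_{n-4}+\cdots$, whose top index is $n$); hence $j=n+1$ and $j-1=n$, and the $k$-th move is precisely the central $n$-Fibonacci move. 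If $F_{n+1}+1\leqslant k\leqslant F_{n+2}-1$, set $k'=k-F_{n+1}$, so $1\leqslant k'\leqslant F_n-1$; both $u_{n+1}$ and $v_{n+1}$ equal $1$, which forces $u_n=v_n=0$ by the ZF-property, and the remaining digits form the Zeckendorf-Fibonacci expansions of $k'-1$ and $k'$ (in indices $2$ to $n-1$). The biggest differing index $j$ coincides with the corresponding index $j'$ for the $(n-2)$-disk puzzle, and the $k$-th move of the $n$-disk puzzle equals the $k'$-th move of the $(n-2)$-disk puzzle, so the induction hypothesis on $n-2$ closes the case.

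The main obstacle, and really the only subtle point, is the bookkeeping in Case 2: one must identify the Zeckendorf-Fibonacci expansion of $F_{n+1}-1$ precisely enough to read off $u_{n+1}=0$ and locate the position where $u$ and $v$ first differ. The identity $F_{n+1}-1=F_n+F_{n-2}+F_{n-4}+\cdots$ can either be proved as a short auxiliary lemma by induction on $n$, or recorded as a remark in the preceding discussion on the Zeckendorf-Fibonacci numeration system. Apart from that, every step reduces cleanly to the inductive hypothesis, with the leading-zeros convention on ZF-expansions ensuring that the index $j$ is the same regardless of the length chosen for the expansions.
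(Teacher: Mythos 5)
Your proposal is correct and follows essentially the same route as the paper's own proof: induction on $n$ via the three-phase decomposition $(\Delta_n,\varnothing,\varnothing)\longrightarrow(n,\Delta_{n-1},\varnothing)\longrightarrow(\varnothing,\Delta_{n-2},(n-1)n)\longrightarrow(\varnothing,\varnothing,\Delta_n)$, splitting according to $k<F_{n+1}$, $k=F_{n+1}$, or $k>F_{n+1}$ and reducing to the hypotheses for $n-1$ and $n-2$. The only difference is that you invoke the explicit expansion $F_{n+1}-1=F_n+F_{n-2}+\cdots$, whereas $u_{n+1}=0$ already follows from the mere inequality $k-1<F_{n+1}$, so that auxiliary lemma is not actually needed.
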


\begin{proof} This is a simple induction making use of the recursive description of the algorithm
\[(\Delta_n,\varnothing,\varnothing)\longrightarrow(n,\Delta_{n-1},\varnothing)\longrightarrow(\varnothing,\Delta_{n-2}, (n-1)n)\longrightarrow(\varnothing,\varnothing,\Delta_n).\]

The property is true for $n=0$ and $n=1$. Assume it is true for $n-2$ and $n-1$ for some $n\geqslant 2$.  The moves from $(\Delta_n,\varnothing,\varnothing)$ to $(n,\Delta_{n-1},\varnothing)$ are moves from $1$ to $m_{n-1}=F_{n+1}-1$, so their Zeckendorf-Fibonacci exansion of length $n$ are all of the form $[0u_{n}\cdots u_2]_F$. Hence, by induction hypothesis on the puzzle with $n-1$ disks, the property is true for all these moves.

Now, the Fibonacci move $(n,\Delta_{n-1},\varnothing)\longrightarrow(\varnothing,\Delta_{n-2}, (n-1)n)$ is the $F_{n+1}$-th one, of Zeckendorf-Fibonacci expansion of length $n$ equal to $[10\cdots 0]_F$. The biggest moving disk in this move is the $n$-th one, and the biggest index $j$ as defined in the theorem is equal to $n+1$, so the theorem is also valid for this move. 

The remaining $F_{n}$ moves are the ones with Zeckendorf-Fibonacci expansion of length $n$ of the form $[10u_{n-1}\cdots u_2]_F$, where $[u_{n-1}\cdots u_0]_F$ is the Zeckendorf-Fibonacci expansion of length $n-2$ of $k-F_{n+1}$. Hence, we can apply to it the induction hypothesis made on the puzzle with $n-2$ disks, and we are done.\end{proof}

\begin{theorem}\label{NbMovesDiskK} Let $0<k\leqslant n$ be two integers. For the Hanoi-Fibonacci puzzle with $n$ disks, the number of $k$-Fibonacci moves in the optimal algorithm is equal to $F_{n+1-k}$.
\end{theorem}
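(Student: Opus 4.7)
The plan is to denote by $N_k(n)$ the number of $k$-Fibonacci moves in the optimal algorithm for the Tower of Hanoi-Fibonacci with $n$ disks, and to prove $N_k(n)=F_{n+1-k}$ by induction on $n$ (with $k$ fixed), using the same three-phase recursive decomposition
\[(\Delta_n,\varnothing,\varnothing)\stackrel{m_{n-1}}{\longrightarrow}(n,\Delta_{n-1},\varnothing)\stackrel{1}{\longrightarrow}(\varnothing,\Delta_{n-2},(n-1)n)\stackrel{m_{n-2}}{\longrightarrow}(\varnothing,\varnothing,\Delta_n)\]
that was already used in the proof of Theorem \ref{OptimalFibo}.

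The key reduction is to argue that the first phase is, up to a permutation of the pegs, an optimal Hanoi-Fibonacci resolution for $n-1$ disks, and that the third phase is an optimal Hanoi-Fibonacci resolution for $n-2$ disks on $\Delta_{n-2}$. The latter point is the only place where a genuine verification is required: I would note that a $j$-Fibonacci move depends solely on the topmost disk of two of the pegs, so the fact that the disks $n-1$ and $n$ sit permanently at the bottom of peg $C$ during the third phase does not interfere with any Fibonacci move performed on disks of index at most $n-2$; hence the restriction of the third phase to $\Delta_{n-2}$ is a valid Hanoi-Fibonacci resolution, and it must be the optimal one (otherwise the overall algorithm would not be optimal, contradicting Theorem \ref{OptimalFibo}). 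The middle move, by construction, is a single $n$-Fibonacci move and contributes nothing when $k<n$.

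This yields, for $1\le k\le n$, the recurrence
\[N_k(n)=N_k(n-1)+N_k(n-2)+\delta_{k,n},\]
with the convention $N_k(m)=0$ for $m<k$ (no $k$-Fibonacci move is possible with fewer than $k$ disks). For $k<n$ this is the pure Fibonacci recurrence, and the base cases $N_k(k-1)=0=F_0$ and $N_k(k)=1=F_1$ (the unique $k$-Fibonacci move being the middle move of the $k$-disk decomposition) give $N_k(n)=F_{n+1-k}$ by immediate induction. The only subtle step is the justification that the third phase genuinely replicates the $(n-2)$-disk instance of the puzzle; once this is granted, the computation is mechanical.
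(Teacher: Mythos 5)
Your proof is correct and follows essentially the same route as the paper: both count $k$-Fibonacci moves through the three-phase recursive decomposition of the unique optimal algorithm, yielding the Fibonacci recurrence $N_k(n)=N_k(n-1)+N_k(n-2)$ (plus the middle move when $k=n$) with the obvious base cases. The only difference is bookkeeping — you fix $k$ and induct on $n$ with the convention $N_k(m)=0$ for $m<k$, while the paper phrases it as induction on $n$ with the cases $k=n$ and $k=n-1$ checked separately — and your explicit justification that the third phase replicates the $(n-2)$-disk instance is a point the paper leaves implicit.
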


\begin{proof} We procede by induction on $n$ and (decreasing) induction on $k$. For $k=n$, the number we are looking for is equal to $1$, which corresponds indeed to $F_{n+1-k}=F_{1}=1$. For $k=n-1$, it is easy to check that there is also exactly one $k$-Fibonacci move, a number equal to $F_{n+1-k}=F_2$. Now, for $k<n-1$, by induction hypothesis, the part $(\Delta_n,\varnothing,\varnothing)\longrightarrow(n,\Delta_{n-1},\varnothing)$ of the algorithm involves a number of $k$-Fibonacci moves equal to $F_{(n-1)+1-k}=F_{n-k}$, and the part $(\varnothing,\Delta_{n-2}, (n-1)n)\longrightarrow(\varnothing,\varnothing,\Delta_n)$ involves $F_{(n-2)+1-k}=F_{n-k-1}$ moves, hence a total equal to $F_{n-k}+F_{n-k-1}=F_{n+1-k}$ moves.\end{proof}

Theorem \ref{NumerationFibo} provides a complete iterative algorithm for the Tower of Hanoi-Fibonacci, with the only issue that, for $1$-Fibonacci moves (i.e. a move of the single disk $d_1$), one has to determine on which peg the disk $d_1$ has to move. Here is an answer to this question. Let us say that $d_1$ is moving to the right (resp. to the left) whenever it moves from $A$ to $B$, from $B$ to $C$ or from $C$ to $A$ (resp. from $A$ to $C$, from $B$ to $A$ or from $C$ to $B$). Thus, by Theorem \ref{NbMovesDiskK}, we can code the sequence of $1$-Fibonacci moves for the puzzle with $n$ disks as a word $\mu_n\in\{l,r\}^{F_n}$, where $l$ denotes a move to the left and $r$ a move to the right. We then have the following result:

\begin{theorem}\label{SubstBizarre} In the optimal algorithm for the Tower of Hanoi-Fiboonacci:

\begin{itemize}

\item if $n\in 2\N^*$, then the $k$-th letter of $\mu_n$ is a $r$ iff $Z(k)$ has an even number of $1$s;

\item if $n\notin 2\N^*$, then the $k$-th letter of $\mu_n$ is a $r$ iff $Z(k)$ has an odd number of $1$s.

\end{itemize}
\end{theorem}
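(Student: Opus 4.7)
The plan is induction on $n$, built on the recursive decomposition of the optimal algorithm from the proof of Theorem \ref{OptimalFibo}. First I would note that for $n\geqslant 2$ the middle step of that decomposition is an $n$-Fibonacci move, so it is never a $1$-Fibonacci move and contributes nothing to $\mu_n$; hence every $1$-Fibonacci move belongs to one of the two sub-algorithms, which transfer $\Delta_{n-1}$ from $A$ to $B$ and $\Delta_{n-2}$ from $B$ to $C$. Since the Fibonacci-move rule is invariant under every permutation of the pegs, the first sub-algorithm is, after relabeling, the standard $(n-1)$-disk optimal algorithm with $B$ and $C$ swapped, and the second is the $(n-2)$-disk one with $A$ and $B$ swapped. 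Both relabelings reverse the cyclic order of the pegs, and hence reverse every $l/r$ direction of $d_1$, so I would record the identity $\mu_n = \bar{\mu}_{n-1}\cdot\bar{\mu}_{n-2}$, where $\bar{\cdot}$ denotes the involution swapping $l$ and $r$.

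With this identity in hand, I would check the base cases $n = 0, 1, 2$ directly from $\mu_0 = \varepsilon$, $\mu_1 = l$, $\mu_2 = r$, and then handle the first block $1\leqslant k\leqslant F_{n-1}$: there $\mu_n[k] = \bar{\mu}_{n-1}[k]$, the induction hypothesis for $n-1$ applies, and the bar swaps the ``even $Z$'' and ``odd $Z$'' cases of the statement, exactly matching the parity change from $n-1$ to $n$.

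The substantive step is the second block $F_{n-1} < k\leqslant F_n$. Writing $k = F_{n-1} + k'$ with $1\leqslant k'\leqslant F_{n-2}$, the key input is the Zeckendorf normal form
\[Z(j) \;=\; 1\,0\,Z(j - F_{n-1}) \qquad\text{for } F_{n-1}\leqslant j < F_n,\]
where the leading $1$ sits at position $n-1$ (forced by $j<F_n$), the following $0$ at position $n-2$ is forced by the no-$11$ rule, and the tail fits at positions $\leqslant n-3$ because $j - F_{n-1} < F_{n-2}$. This normal form shows that the number of $1$s in the relevant Zeckendorf-Fibonacci expansion increases by exactly one when passing from the $k'$-indexed expansion to the $k$-indexed one, hence flips the parity. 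Combining this single-bit flip with the bar in $\bar{\mu}_{n-2}[k']$ and with the fact that $n$ and $n-2$ have the same parity closes the induction exactly as required.

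The main obstacle will be this Zeckendorf bookkeeping in the second block: the concatenation identity and the first-block parity argument are essentially immediate once stated, but checking that the parity statistic shifts correctly under the addition of $F_{n-1}$ is the only genuine arithmetic step, and the whole induction pivots on the normal form above.
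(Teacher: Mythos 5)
Your skeleton is essentially the paper's: induction on the three-stage decomposition of the optimal algorithm, the observation that each sub-phase is the optimal algorithm on $n-1$ (resp.\ $n-2$) disks composed with an odd permutation of the pegs, so that every $l/r$ direction is flipped and $\mu_n=\mu_{n-1}^*\mu_{n-2}^*$ (this is exactly Corollary \ref{Qualit}), and finally the ``prepend $10$'' Zeckendorf normal form to account for the second block. In spirit you have reconstructed the paper's own proof, in a more explicit form.

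However, the inductive step as you wrote it has a genuine gap precisely at the point you identified as the crux. Your normal form $Z(j)=10\,Z(j-F_{n-1})$ is valid only for $F_{n-1}\leqslant j<F_n$, while your second block runs over $F_{n-1}<k\leqslant F_n$: the last letter of $\mu_n$, namely $k=F_n$ (i.e.\ $k'=F_{n-2}$), is not covered, and at that index the argument cannot be repaired, because the expansions of $F_n$ and of $F_{n-2}$ each contain a single $1$, so the parity does \emph{not} flip, whereas the bar does flip the letter: for instance $\mu_3=lr$, $\mu_5=lrrrl$, so $\mu_5[5]=l=\overline{\mu_3[2]}$ although the expansions $1000$ of $5$ and $10$ of $2$ have the same number of $1$s. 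Relatedly, the base cases you propose to ``check directly'' do not check out under the paper's orientation convention: $\mu_1=l$, yet the expansion of $1$ has an odd number of $1$s, so the odd-$n$ bullet would demand $r$. Both symptoms point the same way: the bookkeeping becomes exception-free only if the $1$-Fibonacci move is indexed by its position $j$ in the whole algorithm (as in Theorem \ref{NumerationFibo}), since the last phase then occupies $F_{n+1}<j<F_{n+2}$ and every such $j$ has expansion $10u_{n-1}\cdots u_2$ with no boundary case, and the orientation/parity conventions must be pinned down to match; with the within-$\mu_n$ indexing you use (and which the paper's own terse proof also appears to use), the last letter of each $\mu_n$ escapes the induction, so your proof, like the statement it follows, needs this repair before it closes.
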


\begin{proof} We proceeds by induction on $n\geqslant 2$. Write the decomposition of the optimal solution of the puzzle, with the corresponding number of $1$-Fibonacci moves (given by Theorem \ref{NbMovesDiskK} with $k=1$).
\[(\Delta_n,\varnothing,\varnothing)\stackrel{F_{n-1}-1}{\longrightarrow}(n,\Delta_{n-1},\varnothing)\stackrel{0}{\longrightarrow}(\varnothing,\Delta_{n-2}, (n-1)n)\stackrel{F_{n-2}-1}{\longrightarrow}(\varnothing,\varnothing,\Delta_n).\]
Assume for example $n\in 2\N^*$ (the other case would be similar). Consider the $k$-th letter of $\mu_n$ corresponding to a $1$-Fibonacci move among the $F_{n-1}-1$ first ones. By the induction hypothesis, the fact that $n-1$ is odd and the fact that the $1$-Fibonacci moves corresponding to $(\Delta_n,\varnothing,\varnothing)\longrightarrow(n,\Delta_{n-1},\varnothing)$ are the same as the one for $(\Delta_n,\varnothing,\varnothing)\longrightarrow (n,\varnothing,\Delta_{n-1})$ but with exchanging the $r$s and the $l$s, we have that the considered $k$-th letter is a $r$ iff $Z(k)$ has an even number of $1$s. For a value of $k$ corresponding to a $1$-Fibonacci move among the last $F_{n-2}-1$ Fibonacci moves, the reasoning is the same, with the additional consideration that the Zeckendorf expansion of $k$ is now of the form $[10u_{n-3}\cdots u_2]_F$.
\end{proof}

Let us also mention the following qualitative results, that show in particular that the number of $1$-Fibonacci moves to the left and to the right are as balanced as possible.

\begin{corollary}\label{Qualit} We have $\mu_0=\varnothing$, $\mu_1=l$ and, for any $n\geqslant 2$, $\mu_n=(\mu_{n-1}\mu_{n-2})^*=\mu_{n-2}\mu_{n-3}\mu_{n-3}\mu_{n-4}$, where $\mu^*$ stands for the word in which each $l$ has been replaced by a $r$ and each $r$ by a $l$. Moreover, denoting by $|\mu|_d$ the number of letter $d\in\{r,l\}$ in the word $\mu$, we  have
\[|\mu_{3n}|_r=|\mu_{3n}|_l,\qquad |\mu_{3n+1}|_l-|\mu_{3n+1}|_r=1\qquad\text{and}\qquad |\mu_{3n+2}|_r-|\mu_{3n+2}|_l=1.\]
\end{corollary}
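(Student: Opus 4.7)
The plan is threefold: first establish the recursion $\mu_n = (\mu_{n-1}\mu_{n-2})^*$ for $n\geqslant 2$ from the three-phase decomposition of the optimal algorithm, then iterate it once to obtain the four-block form for $n\geqslant 4$, and finally extract the balance identities from a simple recursion on $d_n := |\mu_n|_r - |\mu_n|_l$.

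The base cases are immediate: $\mu_0 = \varnothing$ (no disk, no move) and $\mu_1 = l$ (the unique move is $A \to C$, which is a move to the left). For $n\geqslant 2$ I would recall from Theorem \ref{OptimalFibo} the unique decomposition
\[(\Delta_n,\varnothing,\varnothing) \longrightarrow (n,\Delta_{n-1},\varnothing) \longrightarrow (\varnothing,\Delta_{n-2},(n-1)n) \longrightarrow (\varnothing,\varnothing,\Delta_n),\]
in which the first segment is an optimal solution of an $(n-1)$-disk sub-puzzle with target peg $B$, the second is a single $n$-Fibonacci move contributing no $1$-Fibonacci move, and the third is an optimal solution of an $(n-2)$-disk sub-puzzle starting from $B$ and targeting $C$. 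The crucial remark is that each of these sub-algorithms is the image of a standard smaller-puzzle algorithm under a transposition of pegs: the first phase under $B \leftrightarrow C$, the third under $A \leftrightarrow B$. Either transposition sends each of the three ``right'' moves $A \to B$, $B \to C$, $C \to A$ to a ``left'' move and conversely, so the two sub-phases contribute the sequences $\mu_{n-1}^*$ and $\mu_{n-2}^*$ respectively. Since the $*$ operation distributes over concatenation, this yields $\mu_n = \mu_{n-1}^* \mu_{n-2}^* = (\mu_{n-1}\mu_{n-2})^*$. The four-block form for $n\geqslant 4$ follows by applying the recursion once more to $\mu_{n-1}$ and $\mu_{n-2}$ and using the involutivity $(w^*)^* = w$: one gets $\mu_{n-1}^* = \mu_{n-2}\mu_{n-3}$ and $\mu_{n-2}^* = \mu_{n-3}\mu_{n-4}$, whence $\mu_n = \mu_{n-2}\mu_{n-3}\mu_{n-3}\mu_{n-4}$.

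For the counting statement, observe that the $*$ operation negates $d_n$ and that $d$ is additive under concatenation; hence the first recursion translates into $d_n = -d_{n-1} - d_{n-2}$, with $d_0 = 0$ and $d_1 = -1$. Computing $d_2 = 1$, $d_3 = 0$, $d_4 = -1$, $d_5 = 1$ already exhibits the $3$-periodic pattern, and a one-line induction on $n$ modulo $3$ confirms $d_{3k} = 0$, $d_{3k+1} = -1$, $d_{3k+2} = 1$ for every $k\geqslant 0$, which is exactly the triple of equalities asserted. The only real subtlety is the first recursion: one must verify that each sub-phase of the optimal algorithm corresponds to a transposition (rather than a cyclic permutation or the identity) of the pegs of the smaller puzzle, so that the direction labels undergo the $*$ involution rather than being preserved.
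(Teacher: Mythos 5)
Your proof is correct and follows essentially the same route as the paper: the three-phase decomposition of the unique optimal solution with the two sub-phases contributing $\mu_{n-1}^*$ and $\mu_{n-2}^*$ (the paper annotates its arrows with exactly these starred words), followed by an induction for the balance claim. You merely make explicit what the paper leaves implicit --- that each sub-phase is a peg transposition of a smaller optimal solution, hence orientation-reversing, and that the counting part reduces to the recursion $d_n=-(d_{n-1}+d_{n-2})$ with period-$3$ behaviour --- which is a faithful elaboration rather than a different argument.
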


\begin{proof}

The first part is proved by an induction on $n$ and the following decomposition of the optimal solution of the puzzle (for $n\geqslant 2$), in which the words on the arrows stand for the sequences of moves of $d_1$ during $1$-Fibonacci moves:
\[(\Delta_n,\varnothing,\varnothing)\stackrel{\mu_{n-1}^*}{\longrightarrow}(n,\Delta_{n-1},\varnothing)\stackrel{}{\longrightarrow}(\varnothing,\Delta_{n-2}, (n-1)n)\stackrel{\mu_{n-2}^*}{\longrightarrow}(\varnothing,\varnothing,\Delta_n).\]
The second part is a simple induction on $n$.
\end{proof}

\subsection{Gray-like code}\label{Gray}

To complete notations set up for lists in Section \ref{HanoiClassique}, for any list $\mathcal{L}$ of elements of $\{0,1\}^n$, write $\mathcal{L}'$ for the list made of all elements of $\mathcal{L}$ in which the leftmost letter is removed. Then, set $\mathcal{N}_0=\varnothing$, $\mathcal{N}_1:=\{1\}$ and $\mathcal{N}_n:=10\overline{\mathcal{N}_{n-1}'}+ 10\overline{\mathcal{N}_{n-2}}$ (here writing each word always with a $1$ as leftmost digit). Eventually, let ${\displaystyle\mathcal{G}:=\sum_{i\geqslant 0}\mathcal{N}_i:=\{g_1,g_2,\ldots\}}$. Such a construction may be seen as a mirroring process analogous to the classical one for binary Gray codes, as shown in Table \ref{GrayFiboHanoi6}.

\begin{table}[h!]
\centering
\begin{tabular}{rrcccccccl}
\ldelim\{{20}{7mm}[$\mathcal{G}_6$ ]&$g_1=$&0&0&0&0&0&1&\rdelim\}{1}{5mm}[ $\mathcal{N}_1$]\\
\cdashline{7-8}[2pt/2pt]
&$g_2=$&0&0&0&0&1&0&\rdelim\}{1}{5mm}[ $\mathcal{N}_2$]\\
\cdashline{6-7}[2pt/2pt]\cline{8-8}
&$g_3=$&0&0&0&1&0&0&\rdelim\}{2}{5mm}[ $\mathcal{N}_3$]\\
&$g_4=$&0&0&0&1&0&1\\
\cdashline{5-6}[2pt/2pt]\cline{7-8}
&$g_5=$&0&0&1&0&0&1&\rdelim\}{3}{5mm}[ $\mathcal{N}_4$]\\
&$g_6=$&0&0&1&0&0&0\\
&$g_7=$&0&0&1&0&1&0\\
\cdashline{4-5}[2pt/2pt]\cline{6-8}
&$g_8=$&0&1&0&0&1&0&\rdelim\}{5}{5mm}[ $\mathcal{N}_5$]\\
&$g_9=$&0&1&0&0&0&0\\
&$g_{10}=$&0&1&0&0&0&1\\
&$g_{11}=$&0&1&0&1&0&1\\
&$g_{12}=$&0&1&0&1&0&0\\
\cdashline{3-4}[2pt/2pt]\cline{5-8}
&$g_{13}=$&1&0&0&1&0&0&\rdelim\}{8}{5mm}[ $\mathcal{N}_6$]\\
&$g_{14}=$&1&0&0&1&0&1\\
&$g_{15}=$&1&0&0&0&0&1\\
&$g_{16}=$&1&0&0&0&0&0\\
&$g_{17}=$&1&0&0&0&1&0\\
&$g_{18}=$&1&0&1&0&1&0\\
&$g_{19}=$&1&0&1&0&0&0\\
&$g_{20}=$&1&0&1&0&0&1\\
&&&&&&&
\end{tabular}
\caption{The Gray-like code of the Tower of Fibonacci-Hanoi with $n=6$ disks.}
\label{GrayFiboHanoi6}
\end{table}

%For any $k\geqslant -1$ we put $J_k=[F_{k+3},F_{k+4}[$, hence the sequence $(J_k)_{k\geqslant 1}$ is a partition of $\N^*$ made of intervals in increasing order. For any $n>0$, $k(n)$ will stand for the unique integer such that $n\in J_{k(n)}$. For a word $u=u_{\ell-1}\ldots u_0$ and $k\geqslant -1$, $u^{(k)}$ is the subword (possibly empty) $u^{(k-1)}\ldots u^{(0)}$.

%\begin{definition}\label{Def2} For any $n\geqslant 1$, put $m(n):=2f_{k(n)+1}-1-n$. We define the sequence $(g_n)_{n}$ in $B_2^*$ by: $g_0=0$ and $g_n=10^{k(n)-1}+g_{m(n)}^{(k(n)-2)}$ for $n\geqslant 1$.\end{definition}

Recall that the {\em Hamming distance} between $w$ and $w'\in\{0,1\}^n$, written $h(w,w')$, is the number of their differents digits, that is: for $w=w_1\cdots w_n$ and $w'=w'_0\cdots w'_n$, we have $h(w,w')=\card(1\leqslant i\leqslant n\ :\ w_i\neq w'_i)$. When $w$ and $w'$ do not have the same number of letters, we append as many $0$ as necessary to the shortest one to make it of the same length as the other.

\begin{theorem}\label{QuasiHamming}  The application $g_n\longmapsto Z(n)$ is a bijection from the set of all non-empty ZF-words (assuming the equivalence between $u_n\cdots u_2$ and $0u_n\cdots u_2$) onto $\N^*$. Moreover, for any $n\geqslant 1$ we have
\[h(g_n,g_{n+1})=\left\{\begin{array}{cl}2&\mbox{if $n+1=F_{k}$ for some $k\geqslant 3$;}\\1&\mbox{otherwise.}\end{array}\right.\]

\end{theorem}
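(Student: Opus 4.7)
The plan is to prove both parts jointly by strong induction on $n$, maintaining for each $n \geq 1$ the following invariants: (I) $\mathcal{N}_n$ lists without repetition every ZF-word of length exactly $n$ starting with the digit $1$, so that $|\mathcal{N}_n| = F_n$; (II) two consecutive words inside $\mathcal{N}_n$ have Hamming distance $1$; and (III) if $\alpha_n$ and $\omega_n$ denote the first and last words of $\mathcal{N}_n$, then for $n \geq 3$
\[
\alpha_n = 10\,\sigma(\omega_{n-1}), \qquad \omega_n = 10\,\alpha_{n-2},
\]
where $\sigma(w)$ is the suffix of $w$ after removal of its leading $1$. The bijection statement of the theorem follows from (I): every non-empty ZF-word, in its canonical form without leading zeros, has a unique length $n \geq 1$ and a leading digit $1$, so it appears exactly once in $\mathcal{N}_n$, and hence exactly once in $\mathcal{G}$.

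Invariants (I) and (III) fall out of the recursion $\mathcal{N}_n = 10\,\overline{\mathcal{N}_{n-1}'} + 10\,\overline{\mathcal{N}_{n-2}}$, recalling that the overbar reverses the list and the prime strips the leading digit of each element. The two concatenated blocks are disjoint because elements of the first start $100\cdots$ (stripping the leading $1$ of a ZF-word starting with $1$ yields a word starting with $0$), while elements of the second start $101\cdots$. Comparing the sum of the block sizes with the known total count $F_n$ of ZF-words of length $n$ starting with $1$ pins down (I); (III) is just the extremal positions of the two blocks read through the reversal.

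For (II), the only case that is not an immediate inductive step is the transition inside $\mathcal{N}_n$ from the end of the first block to the start of the second. Using (III) at the smaller index $n-1$, the last word of $10\,\overline{\mathcal{N}_{n-1}'}$ equals $10\,\sigma(\alpha_{n-1}) = 100\,\sigma(\omega_{n-2})$, while the first word of $10\,\overline{\mathcal{N}_{n-2}}$ equals $10\,\omega_{n-2} = 101\,\sigma(\omega_{n-2})$; these differ only in the third digit. Inside each block, the common prefix $10$ and the list reversal preserve Hamming distances between consecutive pairs, and so does the stripping, because every word in $\mathcal{N}_{n-1}$ starts with $1$ and hence the leading digits of two such consecutive words always agree. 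For the transition from $\omega_n$ to $\alpha_{n+1}$ across blocks of $\mathcal{G}$, left-padding $\omega_n$ with a single $0$ gives $0\omega_n = 01\,\sigma(\omega_n)$, to be compared with $\alpha_{n+1} = 10\,\sigma(\omega_n)$; these differ in exactly their first two digits. Since $\mathcal{N}_i$ begins at position $F_{i+1}$ in $\mathcal{G}$ (using $F_1 + \cdots + F_i = F_{i+2} - 1$), the indices $n+1$ at which $h(g_n, g_{n+1}) = 2$ are exactly the Fibonacci numbers $F_k$ with $k \geq 3$.

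The main obstacle is not depth but bookkeeping: carefully juggling the prefix operator, the list-reversing overbar, the leading-digit-stripping prime, and the left-padding convention used when comparing Hamming distances of words of different lengths. Once the explicit formulas for $\alpha_n$ and $\omega_n$ in (III) are in hand, the rest of the induction is a mechanical check.
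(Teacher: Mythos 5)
Your proof is correct and follows essentially the same route as the paper: induction on the mirrored recursive construction of $\mathcal{N}_n$, with the two possible exceptional junctions (between $\mathcal{N}_n$ and $\mathcal{N}_{n+1}$, and between the two blocks $10\overline{\mathcal{N}_{n-1}'}$ and $10\overline{\mathcal{N}_{n-2}}$ inside $\mathcal{N}_n$) checked explicitly to have Hamming distance $2$ and $1$ respectively. Your packaging via the invariant $\alpha_n=10\,\sigma(\omega_{n-1})$, $\omega_n=10\,\alpha_{n-2}$ and the cardinality comparison with $F_n$ is just a slightly more systematic bookkeeping of the same computations the paper carries out on the prefixes $010\cdots$, $100\cdots$, $1000\cdots$, $1010\cdots$.
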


\begin{proof} By induction, assume that $\mathcal{N}_{n-1}'\cup\mathcal{N}_{n-2}$ contains all ZF-words of $\{0,1\}^{n-2}$ exactly once, $\mathcal{N}_{n-1}'$ (resp. $\mathcal{N}_{n-2}$) containing those with a $0$ (resp. a $1$) as leftmost digit. Hence, by definition of $\mathcal{N}_n$, $\mathcal{N}_n$ contains all ZF-words in $\{0,1\}^n$ with $10$ as leftmost digits, each exactly once. As a consequence, by the induction hypothesis, $\mathcal{N}_n'\cup\mathcal{N}_{n-1}$ contains all ZF-words of $\{0,1\}^{n-1}$ exactly once, with $\mathcal{N}_n'$ (resp. $\mathcal{N}_{n-1}$) containing those with a $0$ (resp. a $1$) as  leftmost digit. Hence, $\mathcal{G}$ is in bijection with the set of all ZF-words (but the empty one). The theory of Zeckendorf-Fibonacci recalled in Section \ref{Zeck} thus gives us that $g_n\longmapsto Z(n)$ is a bijection onto $\N^*$.

Now for the property of Hamming distances. By induction, assume the property satisfied for $\mathcal{G}_{n-1}$ and that the Hamming distance among two successive elements of $\mathcal{N}_k$ is always equal to $1$. We have $\mathcal{G}_n=0\mathcal{G}_{n-1}+\mathcal{N}_n=0\mathcal{G}_{n-1}+10\overline{\mathcal{N}_{n-1}'}+10\overline{\mathcal{N}_{n-2}}$. Hence, by induction hypothesis, the Hamming distance of any pair of successive elements of $\mathcal{G}_n$ is equal to $1$, apart, possibly, for the pairs $(0g_{F_{n+1}-1},g_{F_{n+1}})$ (made by the last element of $\mathcal{G}_{n-1}$ and the following one in $\mathcal{G}_n$) and $(g_{F_{n+1}+F_{n-1}-1},g_{F_{n+1}+F_{n-1}})$ (made by the last element of $\mathcal{G}_{n-1}+10\overline{\mathcal{N}_{n-1}'}$ and the following one in $\mathcal{G}_n$).

To prove that the induction hypothesis is valid also for $n$, it only remains to prove that the Hamming distance of the first of these pairs is equal to $2$ and the one of the second pair is equal to $1$. For the first one, by definition of $\mathcal{N}_{n}$, the two different digits of $0g_{F_{n+1}-1}$ and $g_{F_{n+1}}$ are the two leftmost ones, that is: writing $0g_{F_{n+1}-1}$ in the form $010u_{n-3}\cdots u_1$, we have $g_{F_{n+1}}=100u_{n-3}\cdots u_1$. Hence their Hamming distance is indeed equal to $2$. For the second pair, by construction we have (writing $w'$ for the word $w$ from which the leftmost digit has been removed) $g_{F_{n+1}+F_{n-1}-1}=10g'_{F_{n+1}-F_{n-1}}=10g'_{F_n}$ and $g_{F_{n+1}+F_{n-1}}=10g_{F_n-1}$. As already noticed, $g_{F_n}=0100u_{n-4}\cdots u_1$ and $g_{F_n-1}0010u_{n-4}\cdots u_1$, so $10g'_{F_n}=1000u_{n-4}\cdots u_1$ and $10g_{F_n-1}=1010u_{n-4}\cdots u_1$, hence their Hamming distance is equal to $1$.\end{proof}

Strictly speaking, a Gray code has the property that two consecutive elements are always of Hamming distance equal to $1$, hence our list $\mathcal{G}$ is only a Gray-like code. One may wonder if we could recover a real Gray code that lists all the ZF-words. In itself, such a question is too large, and a natural restriction on it is to ask for such a Gray code to be {\em length-increasing}, that is: the list $(g_n)_{n\geqslant 1}$ should order the words in such a way that the leftmost $1$ of $g_n$ is of increasing index with $n$. It is quite easy to check that such a natural condition cannot be satisfied, since for $w$ the last ZF-word of length $n-1$ and $w'$ the first one of length $n$ we necessarily have $h(w,w')\geqslant 2$. This remark leads to the following result:

\begin{theorem}\label{AsFewAsPossible} Let $(u_n)_{n>0}$ be a length-increasing sequence made of all non-null ZF-words (each of them appearing exactly once). For any $n>0$, we have $h(u_n,u_{n+1})\geqslant h(g_n,g_{n+1})$. 
\end{theorem}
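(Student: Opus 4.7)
The plan is a case split on whether $n+1$ is a Fibonacci number, mirroring the two regimes of Theorem \ref{QuasiHamming}. The key structural fact to exploit is that, in any length-increasing enumeration, the words whose leftmost $1$ sits in position $\ell$ must occupy a contiguous block of indices determined purely by counting. Since (as established in the proof of Theorem \ref{QuasiHamming}) there are exactly $F_\ell$ such ZF-words and $\sum_{k=1}^{\ell}F_k=F_{\ell+2}-1$, the block of length-$\ell$ ZF-words is forced to occupy the indices $F_{\ell+1},F_{\ell+1}+1,\ldots,F_{\ell+2}-1$, regardless of the internal ordering chosen within each block.

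If $n+1$ is not of the form $F_k$ with $k\geqslant 3$, then by Theorem \ref{QuasiHamming} we have $h(g_n,g_{n+1})=1$, and the desired inequality reduces to $h(u_n,u_{n+1})\geqslant 1$, which holds simply because each ZF-word appears exactly once, so $u_n\neq u_{n+1}$.

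The substantive case is $n+1=F_j$ for some $j\geqslant 3$, for which $h(g_n,g_{n+1})=2$. By the counting observation, $u_n$ (at index $F_j-1$) is the last word of the length-$(j-2)$ block, and $u_{n+1}$ (at index $F_j$) is the first word of the length-$(j-1)$ block. Padding $u_n$ on the left with a $0$, both words can be viewed as elements of $\{0,1\}^{j-1}$, and one can write
\[u_n=[0\,1\,u_{j-3}\cdots u_2]_F,\qquad u_{n+1}=[1\,0\,v_{j-3}\cdots v_2]_F,\]
where the first digit of $u_n$ is $0$ (the padded one) and the second is $1$ (its leftmost $1$), while $u_{n+1}$ has a $1$ in the leading position and the ZF-constraint (no factor $11$) forces a $0$ in the second position. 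Hence $u_n$ and $u_{n+1}$ disagree at least in the two leftmost positions, giving $h(u_n,u_{n+1})\geqslant 2=h(g_n,g_{n+1})$.

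I do not foresee any real obstacle: the statement is a matching lower bound, and the only non-trivial ingredient is the observation that exactly at the block-transition indices $n+1=F_j$ the ZF-rule forces two unavoidable disagreements between any pair of consecutive words in a length-increasing enumeration. The mild accounting step is matching these transition indices with the Fibonacci numbers, which comes directly from $|\mathcal{N}_k|=F_k$ and $\sum_{k=1}^{\ell}F_k=F_{\ell+2}-1$.
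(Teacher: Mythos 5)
Your proposal is correct and follows essentially the same route as the paper, which disposes of the theorem by exactly the remark you formalize: in any length-increasing enumeration the length transitions are forced by counting ($F_\ell$ words of each length and $\sum_{k\leqslant \ell}F_k=F_{\ell+2}-1$) to occur precisely at the indices $n+1=F_j$, where the no-$11$ constraint forces the two leftmost digits to be $01$ versus $10$, while at all other indices distinctness already gives $h(u_n,u_{n+1})\geqslant 1=h(g_n,g_{n+1})$. (Only a harmless indexing slip: after padding, both words have length $j-1$, so the tails should read $u_{j-2}\cdots u_2$ and $v_{j-2}\cdots v_2$.)
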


In this sense, our Gray-like code $\mathcal{G}$ is as close as possible to a true Gray code length-increasing and containing all the ZF-words.

%\begin{proof} By the result on $h(g_n,g_{n+1})$ in Theorem \ref{QuasiHamming}, we only have to show that $h(u_{f_{k+2}-1},u_{f_{k+2}})\geqslant 2$ for any $k\geqslant 1$.

%Since $(u_n)_{n}$ is of increasing length and that there are exactly $f_{k+2}$ words of $\mathcal{F}$ of length at most $k$, we have $\ell(u_{f_{k+2}})=1+\ell(u_{f_{k+2}-1})$. Hence, their Hamming distance is at least $2$ since, the fo

%\end{proof}

As it is done in \cite[Theorem 5]{Rittaud2022} for another Gray-like code linked to Fibonacci combinatorics given in \cite{Bernini2013}, it is possible to ``de-mirror'' the construction of $\mathcal{G}_n$, in the following way:

\begin{theorem}\label{Demirror} Let $\mathcal{N}_0=\varnothing$, $\mathcal{N}_1=\{1\}$, $\mathcal{N}_2=\{10\}$ and $\mathcal{N}_3=\{100,101\}$. For any $n\geqslant 4$, all the following is well-defined and correct by induction. First, $\mathcal{N}_{n-1}$ has cardinality $F_{n-1}$. Set $\mathcal{N}_{n-1}=\{w_0,\ldots, w_{F_{n-1}-1}\}$. For any $0\leqslant m<F_{n-1}$, there exists a unique $q=q_n(m)\in\{-2,-1,0,1,2\}$ of smallest absolute value such that $w_{m+q}$ ends with a $1$. Eventually, for any $0\leqslant m<F_{n-1}$, set
\[\mathcal{V}_m=\begin{cases}\{w_i0\}&\text{if $q_n(m)=0$,}\\
\{w_i0,w_i1\}&\text{if $q_n(m)\in\{-1,2\}$,}\\
\{w_i1,w_i0\}&\text{if $q_n(m)\in\{-2,1\}$.}\end{cases}\]

We then have ${\displaystyle \mathcal{N}_n=\sum_{0\leqslant m<F_n}\mathcal{V}_m}$.
\end{theorem}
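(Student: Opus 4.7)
The plan is to proceed by induction on $n$. The base cases $n \in \{4, 5\}$ are checked directly against the mirror definition of $\mathcal{N}_n$. For the inductive step $n \geq 6$, I will show that the list produced by the mirror recursion $\mathcal{N}_n = 10\overline{\mathcal{N}_{n-1}'} + 10\overline{\mathcal{N}_{n-2}}$ coincides with the concatenation $\mathcal{V}_0, \mathcal{V}_1, \ldots, \mathcal{V}_{F_{n-1}-1}$ described in the statement.

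The first structural step is to show that the length-$(n-1)$ prefix map $\pi : \mathcal{N}_n \to \mathcal{N}_{n-1}$ is well-defined and surjective, with fiber $\pi^{-1}(w_m)$ equal to $\{w_m 0\}$ if $w_m$ ends in $1$, and equal to $\{w_m 0, w_m 1\}$ (in some order) if $w_m$ ends in $0$. Moreover each fiber occupies a contiguous block of positions in $\mathcal{N}_n$, and these blocks appear in $\mathcal{N}_n$ in the same order that $w_0, w_1, \ldots$ appear in $\mathcal{N}_{n-1}$. This is proved by unfolding the mirror recursion one level deeper to write $\mathcal{N}_n = 100\mathcal{N}_{n-3} + 100\mathcal{N}_{n-2}' + 10\overline{\mathcal{N}_{n-2}}$, comparing block-by-block with the two-block decomposition of $\mathcal{N}_{n-1}$, and invoking the inductive hypothesis. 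This already handles the case $q_n(m) = 0$ of the theorem, since $\mathcal{V}_m$ is a singleton precisely when $w_m$ ends in $1$.

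The main technical step is to determine the order of the size-$2$ fibers via $q_n(m)$. The guiding heuristic is that consecutive elements of $\mathcal{N}_n$ differ in as few bits as possible (compatibly with Theorem \ref{QuasiHamming}), so the last bit at the end of $\mathcal{V}_m$ should match, whenever possible, the last bit at the start of $\mathcal{V}_{m+1}$, and symmetrically on the left. When $q_n(m) = \pm 1$, the adjacent fiber $\mathcal{V}_{m \pm 1}$ is a singleton ending in $0$, which forces the endpoint of $\mathcal{V}_m$ facing it to be $w_m 0$; combined with the sign of $q$ this gives the rules for $q \in \{-1, 1\}$. When $|q_n(m)| = 2$, the intervening fiber has size $2$ with $q_n$-value of absolute value $1$ and the same sign as $q_n(m)$, and its orientation (already fixed by the $|q|=1$ rule just established) propagates back to determine the orientation of $\mathcal{V}_m$. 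Along the way I will verify both the existence and the uniqueness of $q_n(m)$, which reduce to showing that in $\mathcal{N}_{n-1}$ there is no run of five consecutive $0$-ending words and no pattern $\ldots, 1, 0, 0, 0, 1, \ldots$ of two symmetric $1$s straddling three $0$s; both are proved inductively from the last-bit pattern in the three-block decomposition.

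The principal obstacle is the $|q| = 2$ case. Here the relevant $w_{m+q}$ may lie across a boundary of the mirror decomposition of $\mathcal{N}_{n-1}$, where the reflection operator $\overline{\,\cdot\,}$ can flip the natural orientation inherited from depth $n-2$. Verifying that the sign convention $\{-2, 1\} \mapsto (w_m 1, w_m 0)$ versus $\{-1, 2\} \mapsto (w_m 0, w_m 1)$ is precisely compatible with this flip at each of the two boundaries (between $100\mathcal{N}_{n-3}$ and $100\mathcal{N}_{n-2}'$, and between $100\mathcal{N}_{n-2}'$ and $10\overline{\mathcal{N}_{n-2}}$), and that propagation through an intervening size-$2$ fiber remains consistent across these boundary crossings, constitutes the main bookkeeping challenge of the argument.
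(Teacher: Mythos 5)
First, a remark on the comparison: the paper does not prove Theorem \ref{Demirror} directly at all --- it transfers the proof of the analogous de-mirroring statement from \cite{Rittaud2022} (for the code of \cite{Bernini2013}), after observing that the two codes have complementary last-digit sequences. So your attempt at a self-contained induction is a genuinely different route, and its skeleton (first the prefix/fiber structure, then the orientation of the two-element fibers) is the right one; in particular your unfolding $\mathcal{N}_n=100\mathcal{N}_{n-3}+100\mathcal{N}_{n-2}'+10\overline{\mathcal{N}_{n-2}}$, compared with the same unfolding of $\mathcal{N}_{n-1}$, does give your step 1, provided you record that this induction calls on the statement at levels $n-2$ and $n-3$, not only $n-1$.

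As written, though, the proposal has two genuine gaps. The decisive point --- in which order $w_m1,w_m0$ appear in a size-two fiber, especially when $|q_n(m)|=2$ or when $w_{m+q}$ lies across a reflection boundary --- is precisely what you label ``the main bookkeeping challenge'' and leave unverified; moreover the principle you invoke for it (consecutive words should differ in as few digits as possible) is presented only as a heuristic, so nothing in your text actually pins the orientations down. The clean way to close this is to use Theorem \ref{QuasiHamming} itself: since no Fibonacci number lies strictly between $F_{n+1}$ and $F_{n+2}$, any two consecutive elements of $\mathcal{N}_n$ are at Hamming distance exactly $1$; combined with your step 1 and with $h(w_m,w_{m+1})=1$ inside $\mathcal{N}_{n-1}$, this forces the last letter of the last element of $\mathcal{V}_m$ to equal the last letter of the first element of $\mathcal{V}_{m+1}$, hence the orientations of the two-element fibers must alternate along each maximal run of $0$-ending $w_m$'s and are anchored by the neighbouring singleton fibers (which end in $0$); a short case check then identifies this forced assignment with the stated $q_n(m)$ rule, and no analysis of the reflection boundaries is needed. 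Second, your well-definedness argument is incomplete: uniqueness of $q_n(m)$ also fails for the last-digit pattern $1,0,1$ (a tie at distance $1$), not only for $1,0,0,0,1$. What has to be proved is that every interior maximal run of $0$-ending words of $\mathcal{N}_{n-1}$ has length $2$ or $4$, and that the runs at the two ends have length at most $2$; this follows by induction from the recursion satisfied by the word of last digits of $\mathcal{N}_n$ (it is the reversed last-digit word of $\mathcal{N}_{n-1}$ followed by the reversed last-digit word of $\mathcal{N}_{n-2}$). Note that the evenness of the interior runs is also exactly what makes the two-sided forcing above consistent, so this verification cannot be skipped.
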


Practically speaking, the previous result can be enforced in the following way: to get $\mathcal{N}_n$ from $\mathcal{N}_{n-1}$, write each element of the list $\mathcal{N}_{n-1}$, each element being written twice in a row iff it ends with a $0$. Then, concatenate (on the right) to each element of this new list either a $0$ or a $1$ so as to get a word that differs from the previous one by exactly one digit (or, for the first of the list, exactly two digits from the last word of $\mathcal{N}_{n-1}$). 

For the proof, even if the Gray code in \cite{Bernini2013} does not exactly match the one considered in the present article, the two situations are close enough to allow us to invoke here the proof given in \cite{Rittaud2022} for Theorem \ref{Demirror} since, as can be easily checked, the sequence of last digits (starting at $\mathcal{U}_2$) are exactly opposite (i.e. with $1$s and $0$s exchanged) in both definitions. Following \cite{Rittaud2022} also leads to the following description that makes use of the Fibonacci substitution to describe the way digits are to be added to the elements of $\mathcal{N}_{n-1}$ to get $\mathcal{N}_n$. (Hence we do not provide the details here either, see \cite[Corollary 1]{Rittaud2022}.)

\begin{corollary}\label{FiboSubst} Let $\sigma$ be the Fibonacci substitution on the alphabet $\{\alpha,\beta\}$, defined by $\sigma(\alpha)=\alpha\beta$ and $\sigma(\beta)=\alpha$, and let $(\sigma_n)_{n\in\N}:=\sigma^{\infty}(\alpha)$ be its fixed point. Let $(\tau_n)_{n\in\N^*}$ be the word on the alphabet $\{0,1\}$ defined by $\tau_1=1$ and, for any $n\geqslant 2$, $\tau_n=1$ iff $\sigma_{\lfloor n/2\rfloor-1}=\beta$. To get $\mathcal{N}_{n+1}$ from $\mathcal{N}_n=\{g_{F_{n+1}},\ldots, g_{F_{n+2}-1}\}$ in Theorem \ref{Demirror} (with $n\geqslant 2$), we can proceed by the following algorithm:

\begin{itemize}

\item Initialization: $\mathcal{L}:=\varnothing$

\item for $i$ from $F_{n+1}$ to $F_{n+2}-1$:

\indent\indent if $g_i$ ends with a $1$ then $\mathcal{L}:=\mathcal{L}+\{g_i\}$

\indent\indent else $\mathcal{L}:=\mathcal{L}+\{g_i\}+\{g_i\}$

\item write $\mathcal{L}=:\{g_{F_{n+2}},\ldots, g_{F_{n+3}-1}\}$

\item for $i$ from $F_{n+2}$ to $F_{n+3}-1$:

\indent\indent in $\mathcal{L}$, replace $g_i$ by $g_i\tau_{i}$

\item return$(\mathcal{L})$.

\end{itemize}

\end{corollary}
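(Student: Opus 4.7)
The plan is to deduce the corollary from Theorem~\ref{Demirror} by reading off its prescription of $\mathcal{V}_m$ as two successive passes over $\mathcal{N}_n$, one duplicating the stems and one appending their final digits, and then to identify the resulting sequence of appended digits with the word $(\tau_i)$ coming from the Fibonacci substitution via the parallel argument of \cite[Corollary~1]{Rittaud2022}.

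For the first loop, I would observe that whether $\mathcal{V}_m$ has one or two elements is determined only by whether $w_m$ itself ends with a $1$ (equivalently $q_n(m)=0$) or with a $0$ (equivalently $q_n(m)\neq 0$). The multiset of stems of $\mathcal{N}_{n+1}$, listed in order, therefore coincides with what the algorithm produces when it copies each $g_i\in\mathcal{N}_n$ once if it ends with a $1$ and twice otherwise, yielding exactly $F_{n+1}$ stems indexed from $F_{n+2}$ to $F_{n+3}-1$.

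For the second loop, Theorem~\ref{Demirror} tells us that the last digits appended to these stems depend only on the distribution of $1$-ending words in $\mathcal{N}_n$: a position with $q_n(m)=0$ contributes a single $0$, a position with $q_n(m)\in\{-1,2\}$ contributes the pair $(0,1)$, and a position with $q_n(m)\in\{-2,1\}$ contributes the pair $(1,0)$. It thus suffices to verify that the resulting binary sequence is precisely $(\tau_i)_{F_{n+2}\leqslant i<F_{n+3}}$. Using the block decomposition $\mathcal{N}_{n+1}=10\overline{\mathcal{N}_n'}+10\overline{\mathcal{N}_{n-1}}$ from the definition, an induction on $n$ should show that the positions of $1$-endings in $\mathcal{N}_n$ are distributed according to the fixed point $\sigma^{\infty}(\alpha)$ of the Fibonacci substitution, with the shift by $\lfloor n/2\rfloor-1$ reflecting the fact that each $0$-ending stem in $\mathcal{N}_n$ generates a pair of consecutive positions in $\mathcal{N}_{n+1}$ while each $1$-ending stem generates only one.

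The main obstacle is this last identification with the Fibonacci substitution. My strategy is to reuse the computation of \cite[Corollary~1]{Rittaud2022}, which carries out the analogous analysis for the related Gray code of \cite{Bernini2013}. As remarked before the statement, the sequences of last digits of the two constructions are opposite from index $n\geqslant 2$ onwards, so the whole combinatorial induction transfers after swapping $0$s and $1$s. This swap is precisely what is absorbed into the definition $\tau_n=1\Leftrightarrow\sigma_{\lfloor n/2\rfloor-1}=\beta$, together with the anomalous initial value $\tau_1=1$; once this translation is made, no further work is needed, which is why the details are omitted.
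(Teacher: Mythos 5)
Your proposal follows essentially the same route as the paper, which itself gives no detailed argument here: it reads the two loops of the algorithm as the duplication/appending prescription of Theorem~\ref{Demirror} and then transfers the identification of the appended digits with the Fibonacci-substitution word from \cite[Corollary 1]{Rittaud2022}, using the observation that the last digits in the two constructions are opposite so that the swap of $0$s and $1$s is absorbed into the definition of $(\tau_n)$. This matches the paper's intent, with your added explanation of how $q_n(m)$ governs the single versus double copies being a correct and welcome elaboration.
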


\subsection{The Hanoi-Fibonacci graph}\label{Graph}

In the present section, our aim is to investigate how to represent the Tower of Hanoi-Fibonacci by a graph $\mathcal{F}_n=(V_n,E_n)$, in which $V_n$ as the $3^n$ possible states of the puzzle, and $E_n$ as the set of edges $(e,e')$ such that the move from $e$ to $e'$ is a Fibonacci move. Note that, contarily to the graph of the classical Tower of Hanoi, this graph is oriented since Fibonacci moves are not reversible (apart for $1$-Fibonacci moves).

\begin{theorem}\label{FiboNonPlanaire} For any $n\geqslant 2$, the un-oriented graph that corresponds to $\mathcal{F}_n$ is non-planar.
\end{theorem}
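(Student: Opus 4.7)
The plan is to reduce the statement to the case $n = 2$, where I exhibit a $K_{3,3}$ minor of $\mathcal{F}_2$ and conclude by the Kuratowski--Wagner characterisation of planarity.

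For the reduction (for $n \geqslant 3$), I would consider the set $S$ of nine states of $\mathcal{F}_n$ in which the disks $d_3, \ldots, d_n$ are stacked on peg $A$. These are parameterised by the positions $(x,y) \in \{A,B,C\}^2$ of $(d_2, d_1)$. From a state of $S$, a $k$-Fibonacci move with $k \geqslant 4$ is never available since $d_k$ lies strictly below $d_3$ on peg $A$, and a $3$-Fibonacci move (available only from $(B,B)$ and $(C,C)$) necessarily takes $d_3$ off $A$ and thus leaves $S$. Conversely, every $1$- or $2$-Fibonacci move legal in $\mathcal{F}_2$ is legal from the corresponding state of $S$ and keeps us inside $S$ (the base $d_3 \cdots d_n$ acts as an inert floor). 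Hence the subgraph of the un-oriented $\mathcal{F}_n$ induced by $S$ is isomorphic to the un-oriented $\mathcal{F}_2$, and it is enough to prove the latter non-planar.

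For $n = 2$, I would enumerate the edges: the $1$-Fibonacci moves give, for each $x \in \{A,B,C\}$, a triangle on $\{(x,A),(x,B),(x,C)\}$, and the $2$-Fibonacci moves give one edge $(x,y) \sim (z,z)$ whenever $\{x,y,z\} = \{A,B,C\}$. I would then display the following $K_{3,3}$ minor: on one side the three singletons $\{(A,A)\},\{(B,B)\},\{(C,C)\}$, on the other side the three pairs $T_x := \{(x,y) : y \neq x\}$, each pair being connected by its internal $1$-Fibonacci edge. For any diagonal $(w,w)$ and any $x \in \{A,B,C\}$, there is an edge in $\mathcal{F}_2$ between $(w,w)$ and $T_x$: if $x = w$ it is one of the two $1$-Fibonacci edges of the triangle at $x = w$, and if $x \neq w$ it is the $2$-Fibonacci edge landing at $(w,w)$, whose other endpoint is the element of $T_x$ whose $d_1$-position is the peg distinct from both $x$ and $w$. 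This furnishes the nine required cross-edges and hence the $K_{3,3}$ minor.

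The main obstacle is bookkeeping: verifying carefully that the induced subgraph in the reduction is really $\mathcal{F}_2$ and not a supergraph (no stray edges coming from higher $k$-Fibonacci moves), and identifying, for each diagonal, the three off-diagonal states from which a $2$-Fibonacci move terminates there. If the minor presentation feels cluttered, one can instead display a $K_{3,3}$ \emph{subdivision} of $\mathcal{F}_2$ with branch vertices $\{(A,A),(B,B),(C,C)\}$ versus $\{(A,C),(B,A),(C,B)\}$, using the three remaining states $(A,B),(B,C),(C,A)$ as degree-$2$ subdivision vertices on the three cross-pairs that are not already direct edges; this variant is arguably the most visual one, and makes the cyclic symmetry $A \to B \to C \to A$ fully explicit.
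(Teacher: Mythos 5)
Your proof is correct, and its skeleton is the same as the paper's: reduce to $n=2$, exhibit a $K_{3,3}$ minor in $\mathcal{F}_2$, and invoke the Kuratowski--Wagner characterisation. The details of the $K_{3,3}$ differ in an instructive way. The paper merges, for each peg, the two states having that peg empty (e.g.\ $(\varnothing,1,2)$ with $(\varnothing,2,1)$) and pairs the three resulting vertices against the three full-tower states; you instead contract, for each peg $x$, the $1$-Fibonacci edge joining the two off-diagonal states with $d_2$ on $x$ (your $T_x$), keeping the diagonal states as singletons. Your choice is in fact the more careful one: the paper's merged pairs are \emph{not} adjacent in $\mathcal{F}_2$ (no single Fibonacci move swaps $d_1$ and $d_2$), so as written its ``merging'' is a vertex identification rather than an edge contraction, whereas your branch sets are connected and genuinely certify a minor; your alternative $K_{3,3}$ subdivision with branch vertices $(A,A),(B,B),(C,C)$ versus $(A,C),(B,A),(C,B)$ and subdivision vertices $(A,B),(B,C),(C,A)$ also checks out and is arguably the cleanest certificate. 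One economy you could take: for the reduction to $n=2$ you only need $\mathcal{F}_2$ to be a subgraph of $\mathcal{F}_n$, since planarity passes to subgraphs, so the verification that the nine states with $d_3,\ldots,d_n$ parked on $A$ induce exactly $\mathcal{F}_2$ with no stray edges, while correct, is more than is required; the paper gets the same reduction simply from the inclusion $\mathcal{F}_n\subset\mathcal{F}_{n+1}$.
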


\begin{proof} Since $\mathcal{F}_n\subset \mathcal{F}_{n+1}$ for any $n$, it is sufficient to prove that $\mathcal{F}_2$ is non-planar.

\begin{figure}
\centering\includegraphics[height=8cm]{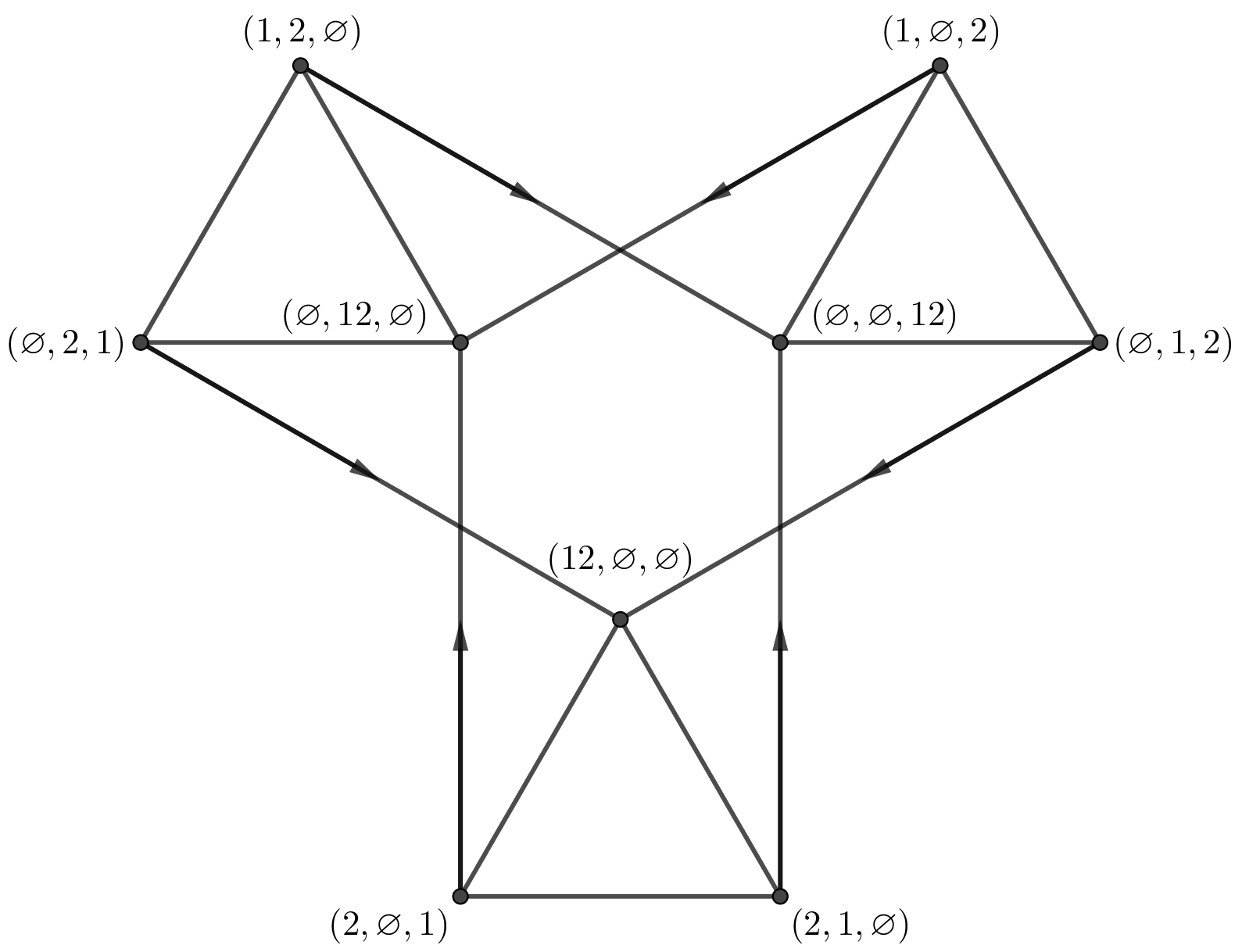}
\begin{caption}{The graph $\mathcal{F}_2$ of the Tower of Hanoi-Fibonacci with $2$ disks. (Line segments stand for edges going both ways.)}\end{caption}
\label{NaiveGraph}
\end{figure}

Let us merge the vertices $(\varnothing,1,2)$ and $(\varnothing,2,1)$ in a single vertex $a$, then the vertices $(1,\varnothing,2)$ and $(2,\varnothing,1)$ to get another single vertex $b$, and eventually the vertices $(1,2,\varnothing)$ and $(2,1,\varnothing)$ to get a third vertex $c$. The graph thus obtained can be split into two subsets of vertices, $V=\{a,b,c\}$ and $V'=\{(12,\varnothing,\varnothing),(\varnothing,12,\varnothing),(\varnothing,\varnothing,12)\}$. The set of edges of this new graph is the set of all possible edges between $V$ and $V'$, hence it is isomorphic to the complete bipartite graph $K_{3,3}$. Hence, $K_{3,3}$ is a minor of $\mathcal{F}_2$, so, by Wagner's theorem, $\mathcal{F}_2$ is not planar.\end{proof}

\begin{theorem}\label{FortementConnexe} For any $n\geqslant 0$, $\mathcal{F}_n$ is strongly connected. In other words, any possible state of the puzzle can be attained from any other under the  rule of the Tower of Hanoi-Fibonacci.
\end{theorem}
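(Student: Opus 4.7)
The plan is to induct on $n$. The base cases $n=0$ (a single state) and $n=1$ (three states joined by reversible $1$-Fibonacci moves) are immediate.

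The inductive step rests on the following preliminary observation, which I would establish first: with the disk $d_n$ fixed on any peg $P$, the smaller disks $\Delta_{n-1}$ may be rearranged into any target configuration using only $k$-Fibonacci moves with $k<n$. The crux is that such a $k$-Fibonacci move is legal in the $(n-1)$-disk puzzle if and only if it is legal in the $n$-disk puzzle with $d_n$ resting at the bottom of some peg; the only sensitive legality condition is that the destination peg either be empty or have its current top disk larger than $k$, and since $d_n>k$, placing $d_n$ at the bottom of a peg never obstructs this. Combined with the inductive hypothesis that $\mathcal{F}_{n-1}$ is strongly connected, this grants arbitrary rearrangement of $\Delta_{n-1}$ while $d_n$ stays put.

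To connect arbitrary states $s,s'$ of $\mathcal{F}_n$, let $P$ and $P'$ denote the pegs carrying $d_n$ in $s$ and $s'$ respectively. When $P=P'$, the observation above already yields a directed path $s\to s'$. When $P\neq P'$, let $Q$ be the third peg; the construction is to first rearrange $\Delta_{n-1}$, with $d_n$ fixed on $P$, so that $d_n$ sits alone on $P$, $\Delta_{n-1}$ occupies $Q$, and $P'$ is empty; then to perform the $n$-Fibonacci move with $X=P$, $Y=Q$, $Z=P'$, which moves $d_n$ to $P'$ and deposits $d_{n-1}$ on top of it; and finally to rearrange $\Delta_{n-1}$ once more, now with $d_n$ fixed on $P'$, to reach $s'$. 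A single $n$-Fibonacci move thus suffices to transport $d_n$ to any chosen peg.

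I do not anticipate a real obstacle here: the substance is concentrated in the preliminary observation, which amounts to a one-line verification. The only detail worth double-checking is the final rearrangement after the $n$-Fibonacci move; there $d_{n-1}$ temporarily sits on top of $d_n$ rather than alone on its peg, but in the reduced $(n-1)$-puzzle $d_{n-1}$ is still the topmost disk of its peg, so the strong connectivity of $\mathcal{F}_{n-1}$ applies to it without modification.
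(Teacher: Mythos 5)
Your proof is correct and follows essentially the same route as the paper: induct on $n$, observe that fixing the position of $d_n$ yields three strongly connected copies of $\mathcal{F}_{n-1}$ (since $d_n$ at the bottom of a peg never affects the legality of $k$-Fibonacci moves with $k<n$), and use the $n$-Fibonacci move $n\sqcup\Delta_{n-1}\sqcup\varnothing\longrightarrow\varnothing\sqcup\Delta_{n-2}\sqcup(n-1)n$ to pass between any ordered pair of copies. Your preliminary observation simply makes explicit the "three copies" claim the paper states without detail, so no further changes are needed.
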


\begin{proof} We proceed by induction on $n$. Assume $\mathcal{F}_n$ is strongly connected, and consider $\mathcal{F}_{n+1}$. This latter graph contains exactly three copies of $\mathcal{F}_n$, that we denote by $\mathcal{F}_n^A$, $\mathcal{F}_n^B$ and $\mathcal{F}_n^C$ depending on the peg on which is located $d_{n+1}$ in each. By the induction hypothesis, each of these three copies of $\mathcal{F}_n$ is strongly connected. Hence, to obtain the desired result it is sufficient to prove that there exists an edge from some vertex of $\mathcal{F}_n^X$ to some vertex of $\mathcal{F}_n^Y$ for any different pegs $X$ and $Y$. The Fibonacci move $(n+1)\sqcup\Delta_n\sqcup\varnothing\longrightarrow\varnothing\sqcup\Delta_{n-1}\sqcup n(n+1)$ provides such an edge.\end{proof}

The previous drawing of $\mathcal{F}_2$ seems quite difficult to extend in a natural way to larger values of $n$, and a slight modification of Figure \ref{GrapheClassique} seems more interesting for visualization purposes, even if it needs some specific codage to makes the arrow diagram handy. Also, as announced in Definition \ref{FibonacciRule}, it will be here easier to work with a slightly modified version of a Fibonacci move, hereafter defined as 
\[k\tilde{X}\sqcup \Delta_{k-1}\tilde{Y}\sqcup Z\longrightarrow \Delta_{k-2}\tilde{X}\sqcup \tilde{Y}\sqcup (k-1)kZ.\]
In this new version, the tower $\Delta_{k-2}$ ends up on $\tilde{X}$ instead of remaining on $\tilde{Y}$. This does not fundamentally change what precedes, and it is easy to check that all the  results obtained under the initial Definition \ref{FibonacciRule} of Fibonacci moves remain unchanged under the present variant. (In particular, since the graph $\mathcal{F}_2$ remains the same for this variant, the new graph is still non-planar.)

Now, with this variant of Fibonacci moves, we can make use of the classical graph of Figure \ref{GrapheClassique} to represent the Hanoi-Fibonacci puzzle. We preserve in this graph the edges that represent $1$-Fibonacci moves. The other edges are also preserved, but in a form that we will call here {\em pseudo-edges}. More precisely: the graph $\mathcal{G}_n$ of the classical puzzle is made of three copies of $\mathcal{G}_{n-1}$, together with three edges that make $\mathcal{G}_n$ connected. In $\mathcal{F}_n$ we define these oriented edges as {\em $n$-pseudo-edges}, represented as arrows labelled by $2^{n-2}+1$ (for $n\geqslant 2$). Let $v$ be a vertex of $\mathcal{F}_n$ which is the origin of a $k$-pseudo-edge $e$ (hence with $1< k\leqslant n$). The $k$-Fibonacci move for the vertex $v$ ends up on the vertex $v'$ obtained by a jump of length $2^{k-2}+1$ in the direction of the edge, i.e. $v'$ is the vertex of the graph at a distance $2^{k-2}+1$ from $v$ (each edge or pseudo-edge counting for $1$) attained by following the path of length $2^{k-2}+1$ defined by the geometrical direction defined by $e$.

\begin{figure}
\centering\includegraphics[height=9cm]{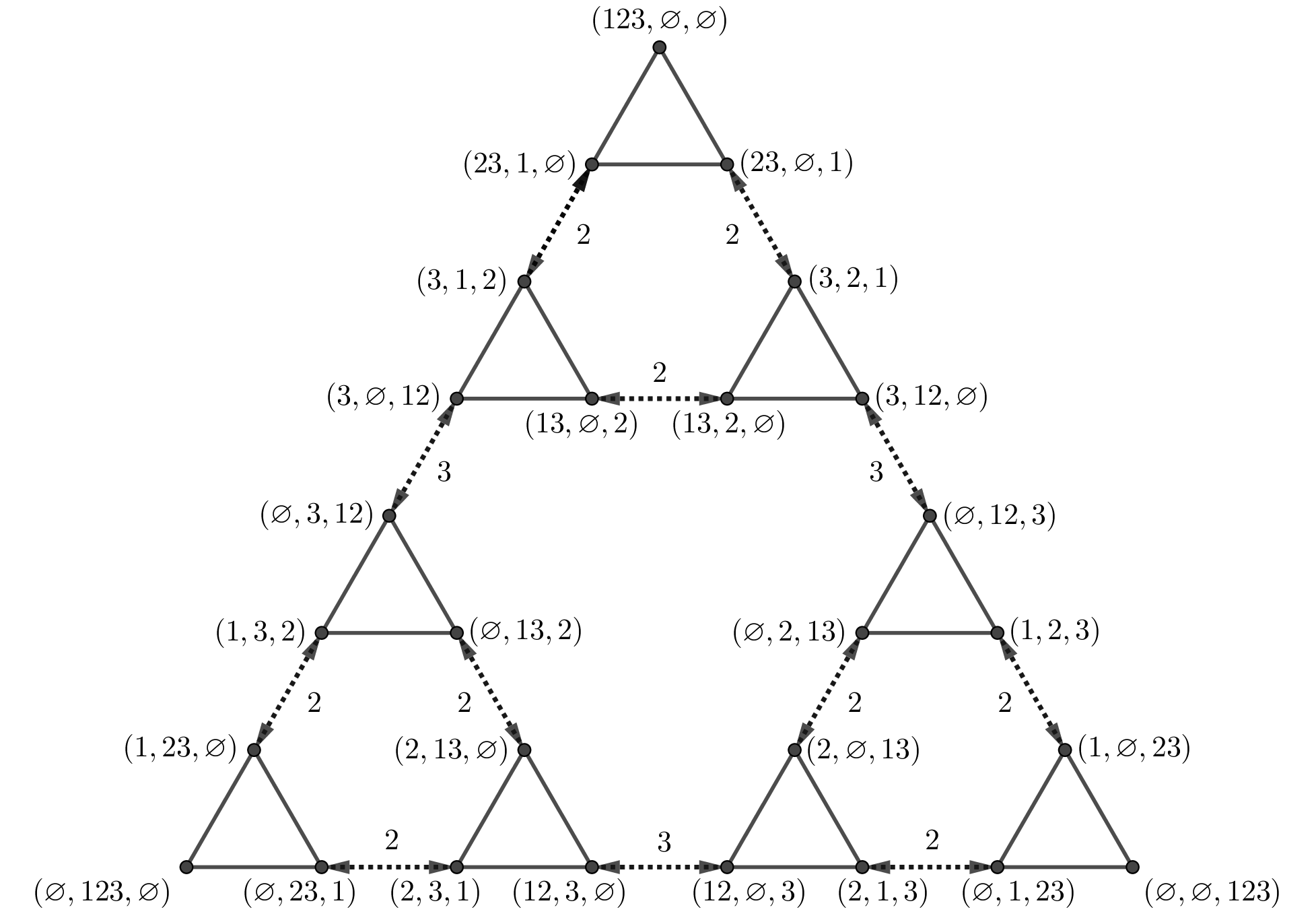}
\begin{caption}{The graph $\mathcal{F}_3$ of the Tower of Hanoi-Fibonacci with $3$ disks (under the variant of the Fibonacci moves) with its pseudo-edges.}\end{caption}
\label{GraphePasClassique}
\end{figure}

\begin{theorem}\label{CestBon} Under the previous definition of $\mathcal{F}_n$, if the vertex $v$ is the origin of a $k$-pseudo-edge ($k\geqslant 2$), then the vertex $v'$ is the state of the puzzle attained by the (only) possible $k$-Fibonacci move from the state $v$.
\end{theorem}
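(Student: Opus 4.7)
The plan is to argue by induction on $n\geqslant 2$. For the base case $n=2$ (hence $k=2$), one checks directly that each of the three $2$-pseudo-edges of $\mathcal{F}_2$ carries the label $2^{0}+1=2$, and that following such an edge for two steps---the pseudo-edge itself, then the forced $1$-Fibonacci move on $d_1$---produces precisely the target of the corresponding $2$-Fibonacci move (read off Definition \ref{FibonacciRule}, in the modified form introduced in this section).

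For the inductive step, assume the claim for $\mathcal{F}_{n-1}$ and consider a $k$-pseudo-edge $e$ with origin $v$ in $\mathcal{F}_n$. Recall that $\mathcal{F}_n$ is obtained from three copies $\mathcal{F}_n^X,\mathcal{F}_n^Y,\mathcal{F}_n^Z$ of $\mathcal{F}_{n-1}$, one for each peg hosting $d_n$, joined by the three $n$-pseudo-edges. If $k<n$, then $e$ lies entirely in one of these copies, say $\mathcal{F}_n^P$, and represents the same $k$-Fibonacci move as in the puzzle with $n-1$ disks (since $d_n$ remains on $P$ throughout). The induction hypothesis, applied inside $\mathcal{F}_n^P\cong\mathcal{F}_{n-1}$, yields the claim, provided one checks that a jump of length $2^{k-2}+1\leqslant 2^{n-3}+1$ starting at $v$ cannot escape $\mathcal{F}_n^P$. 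This last point follows from the observation that the only vertices of $\mathcal{F}_n^P$ incident to an $n$-pseudo-edge are its three ``exit corners'', and their graph-distance from $v$ inside $\mathcal{F}_n^P$ exceeds $2^{k-2}+1$, which is a routine check on the recursive structure of $\mathcal{F}_{n-1}$.

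The substantial case is $k=n$. After relabelling the pegs, I may assume that $e$ represents the move of $d_n$ from $X$ to $Z$; thus $v=(n\tilde{X},\Delta_{n-1}\tilde{Y},Z)$ and the head of $e$ is $v_0=(\tilde{X},\Delta_{n-1}\tilde{Y},nZ)\in\mathcal{F}_n^Z$, at distance $1$ from $v$. The $n$-Fibonacci move sends $v$ to $w=(\Delta_{n-2}\tilde{X},\tilde{Y},(n-1)nZ)$. I would show that the classical Hanoi path from $v_0$ to $w$ has length exactly $2^{n-2}$, realized as the first half of the standard recursive transfer of $\Delta_{n-1}$ from $Y$ to $Z$: namely, $2^{n-2}-1$ moves that transfer $\Delta_{n-2}$ from $Y$ to $X$ (with $d_{n-1}$ stationary on $Y$), followed by one classical move of $d_{n-1}$ from $Y$ to $Z$. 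Adding the initial pseudo-edge yields $1+(2^{n-2}-1)+1=2^{n-2}+1$ edges/pseudo-edges in $\mathcal{F}_n$, each counting for one step, matching the label of $e$.

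It remains to verify that this combinatorial path is the one selected by the ``geometrical direction'' of $e$. In the self-similar drawing of Figures \ref{GrapheClassique} and \ref{GraphePasClassique}, the three $n$-pseudo-edges of $\mathcal{F}_n$ are parallel, respectively, to the three sides of the outer triangle, and by self-similarity the same holds at every recursion level for pseudo-edges of every level. In the path above, the final step is the $(n-1)$-pseudo-edge carrying $d_{n-1}$ from $Y$ to $Z$, which is drawn parallel to $e$ (both point toward the $Z$-side of the triangle at their respective scales). The intermediate $2^{n-2}-1$ steps inside $\mathcal{F}_n^Z$ trace, by induction, a path whose successive pseudo-edges remain parallel to $e$. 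The main obstacle I foresee is precisely making this parallelism argument rigorous, since the peg-labels within each sub-triangle are permuted with respect to the parent; a clean fix is to re-phrase the induction in terms of the geometric side of the outer triangle each pseudo-edge is parallel to, rather than the labels of the pegs it exchanges.
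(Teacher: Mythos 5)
Your argument is essentially the paper's: induction on $n$, with every $k$-pseudo-edge for $k<n$ handled inside one of the three copies of $\mathcal{F}_{n-1}$, and the $k=n$ case settled by exhibiting the classical path $(n,\Delta_{n-1},\varnothing)\longrightarrow(\varnothing,\Delta_{n-1},n)\longrightarrow(\Delta_{n-2},n-1,n)\longrightarrow(\Delta_{n-2},\varnothing,(n-1)n)$ of length $2^{n-2}+1$; the parallelism issue you single out as the remaining obstacle is precisely the step the paper itself dispatches by asserting that these moves are ``all on the same geometrical direction on $\mathcal{H}_n$'', so your write-up is no less complete than the source on that point. One caveat: your auxiliary claim in the $k<n$ case that the corners of a copy incident to $n$-pseudo-edges lie at graph-distance greater than $2^{k-2}+1$ from the origin of a $k$-pseudo-edge is false as stated (a $2$-pseudo-edge can originate at distance $1$ from such a corner, and a length-$2$ jump can even terminate at one), but it is also unnecessary: since the copy is a scaled drawing of $\mathcal{F}_{n-1}$, the induction hypothesis already supplies the directed path of length $2^{k-2}+1$ lying entirely inside it, which is all the containment you need.
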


\begin{proof} Assume the result until $n-1$. The graph $\mathcal{F}_n$ contains three copies of $\mathcal{F}_{n-1}$, in each of which the property is true by induction. Therefore, it remains only to prove that the property is true also for the $n$-pseudo-edges of $\mathcal{F}_n$. By symmetry, it is enough to consider the case of the $n$-pseudo-edge of origin $(n,\Delta_{n-1},\varnothing)$. The $n$-Fibonacci move from this state leads to $(\Delta_{n-2},\varnothing,(n-1)n)$. Also, under the classical rules of the Tower of Hanoi puzzle, going from $(n,\Delta_{n-1},\varnothing)$ to $(\Delta_{n-2},\varnothing,(n-1)n)$ with the optimal algorithm requires exactly $2^{n-2}+1$ moves, which are all on the same geometrical direction on $\mathcal{H}_n$, so we are done.
\end{proof}

We deduce from this a combinatorial proof of the following equality.

\begin{corollary}\label{SommeMignonne} For any $n\geqslant 0$, we have
\[2^{n}=F_{n+2}+\sum_{k=0}^{n-2}2^{k}F_{n-1-k}.\]
\end{corollary}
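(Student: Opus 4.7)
The plan is to obtain the identity by double-counting the length, measured in $\mathcal{H}_n$, of the optimal Hanoi-Fibonacci path from $(\Delta_n,\varnothing,\varnothing)$ to $(\varnothing,\varnothing,\Delta_n)$. Theorem \ref{CestBon} says that a $k$-Fibonacci move with $k\geqslant 2$ is realized in $\mathcal{H}_n$ by a straight (non-backtracking) path of length $2^{k-2}+1$, and every $1$-Fibonacci move is a single edge of $\mathcal{H}_n$. Expanding every Fibonacci move in this way produces a walk in $\mathcal{H}_n$ from $(\Delta_n,\varnothing,\varnothing)$ to $(\varnothing,\varnothing,\Delta_n)$; call $T_n$ its length. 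The identity will fall out once $T_n$ is computed in two ways.

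First, I would prove that $T_n=2^n-1$ by induction on $n$, using the recursive decomposition
\[(\Delta_n,\varnothing,\varnothing)\longrightarrow (n,\Delta_{n-1},\varnothing)\longrightarrow (\Delta_{n-2},\varnothing,(n-1)n)\longrightarrow (\varnothing,\varnothing,\Delta_n)\]
of the optimal Hanoi-Fibonacci solution (in the Section \ref{Graph} variant). The first and third sub-paths are, by induction, expanded walks of length $2^{n-1}-1$ and $2^{n-2}-1$ respectively (they are the Hanoi-Fibonacci optima for $n-1$ and $n-2$ disks, independent of the large disks that stay put), and the middle $n$-Fibonacci move expands into $2^{n-2}+1$ classical edges, giving $T_n=(2^{n-1}-1)+(2^{n-2}+1)+(2^{n-2}-1)=2^n-1$. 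The base cases $T_0=0$, $T_1=1$ are immediate.

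Second, I would count $T_n$ by grouping moves by their size. By Theorem \ref{NbMovesDiskK}, the optimal algorithm contains exactly $F_{n+1-k}$ moves of type $k$-Fibonacci. Hence
\[T_n=F_n\cdot 1+\sum_{k=2}^{n}F_{n+1-k}\bigl(2^{k-2}+1\bigr)=F_n+\sum_{k=2}^{n}F_{n+1-k}+\sum_{k=2}^{n}2^{k-2}F_{n+1-k}.\]
Using the classical telescoping $\sum_{i=1}^{n-1}F_i=F_{n+1}-1$ on the middle sum and re-indexing the last one with $j=k-2$, this becomes $T_n=F_n+(F_{n+1}-1)+\sum_{j=0}^{n-2}2^{j}F_{n-1-j}=F_{n+2}-1+\sum_{j=0}^{n-2}2^{j}F_{n-1-j}$.

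Equating the two expressions for $T_n$ and adding $1$ to both sides yields the stated identity. The only delicate point is the first step: one must check that the expansion of a $k$-Fibonacci move really is an initial segment of the classical optimal solution to the corresponding sub-puzzle, with no backtracking or overlap between successive expansions, so that the concatenation of expansions is a path (not just a walk) of length equal to the sum of the individual lengths. This is precisely the content of Theorem \ref{CestBon} combined with the recursive decomposition above, so the induction in Step 1 goes through cleanly.
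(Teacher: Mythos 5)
Your proposal is correct and follows essentially the same route as the paper: a double count of the classical length of the Hanoi-Fibonacci optimal solution, combining Theorem \ref{NbMovesDiskK} (there are $F_{n+1-k}$ $k$-Fibonacci moves) with the fact from the proof of Theorem \ref{CestBon} that a $k$-Fibonacci move expands into $2^{k-2}+1$ classical moves, the total being $m_n=2^n-1$. Your explicit induction showing the expanded walk has length $2^n-1$ merely spells out what the paper leaves implicit in invoking $m_n=2^n-1$, so no further comment is needed.
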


\begin{proof} With the notation of Section \ref{HanoiClassique}, we have $2^n-1=m_n$. Also, by Theorem \ref{NbMovesDiskK} and the proof of Theorem \ref{CestBon}, we have 
\begin{eqnarray*}
m_n&=&F_{n}+\sum_{k=3}^{n+1}(2^{k-3}+1)F_{n+2-k}\\
&=&\sum_{k=2}^{n+1}F_{n+2-k}+\sum_{k=3}^{n+1}2^{k-3}F_{n+2-k}\\
&=&F_{n+2}-1+\sum_{k=0}^{n-2}2^{k}F_{n-1-k}.
\end{eqnarray*}
\end{proof}

Corollary \ref{SommeMignonne} can be related to the classical fact that the sum of the $n$-th row of Pascal's triangle is $2^n$ and the sum of its $n$-th diagonal is $F_{n}$.

\begin{figure}
\centering\includegraphics[height=6.5cm]{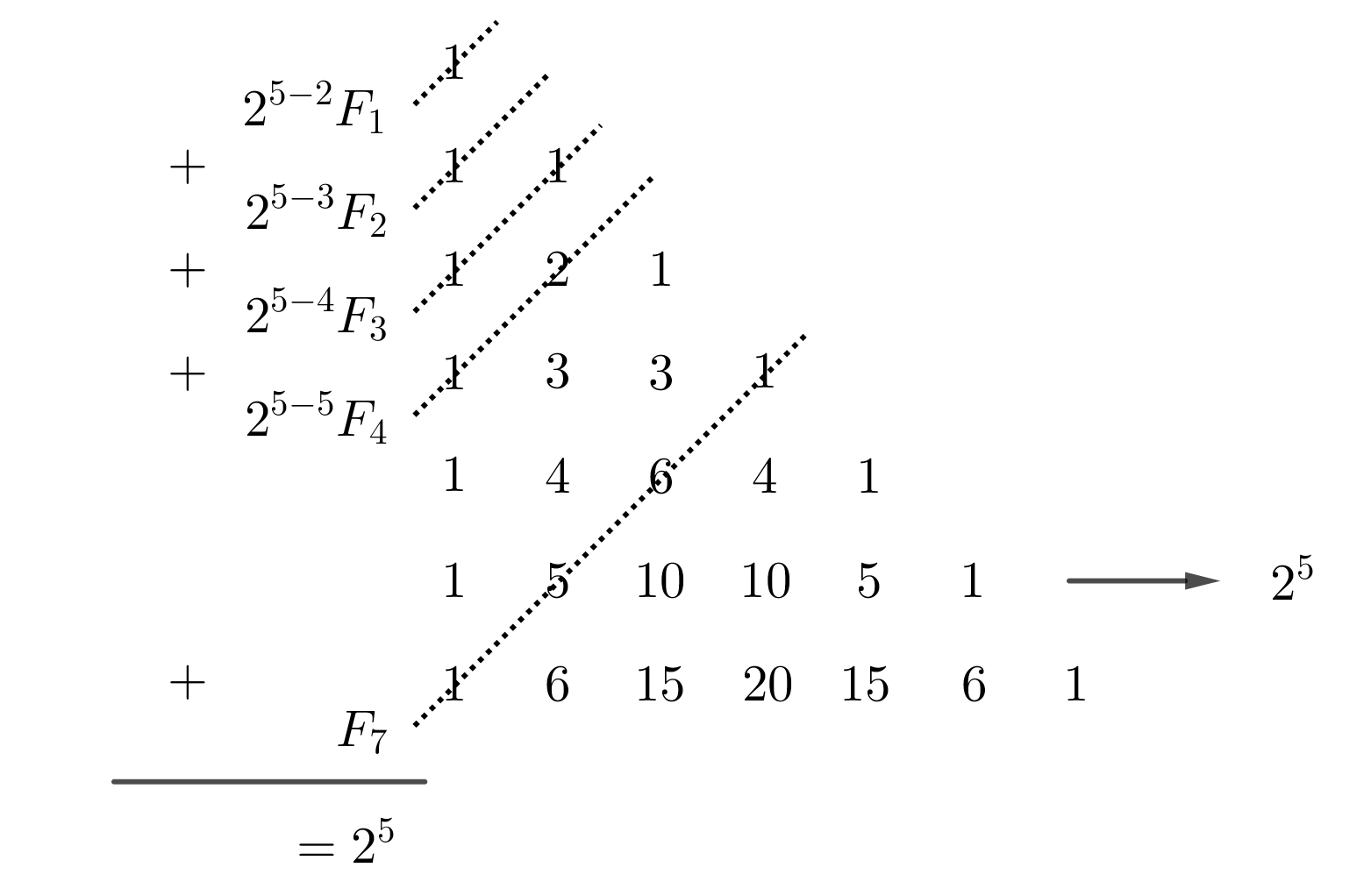}
\begin{caption}{Vizualization of Corollary \ref{SommeMignonne} for $n=5$.}\end{caption}
\label{Pascal}
\end{figure}

\section{Some generalizations and questions}\label{Gene}

\subsection{Modifying the Fibonacci moves}\label{Modif}

Here, we briefly consider alternative ways of defining the allowed moves, extending in a natural way the Fibonacci moves. We write $\Delta_n^{n'}$ for the set of disks $d_k$ with $n'\leqslant k\leqslant n$ (so $\Delta_n^{n'}=\Delta_n^1=\Delta_n$ for $n'\leqslant 1$ and $\Delta_n^{n'}=\varnothing$ for $n'>n$).

\begin{definition} Let $p\geqslant 1$ and $q\geqslant 0$ be two integers. Let $X$ and $Y$ be two different pegs of some state such that, for some $k\in\Delta_n$, we have $X=\Delta_k^{k-p+1}X'$ and $Y=\Delta_{k-p}Y'$. Write $Z$ for the third peg of the state. We define a {\em $(p,q)$-move} as a move that consists in putting simultaneously all the disks of $\Delta_k^{k-p+1-q}$ onto $Z$, i.e.:
\[\Delta_k^{k-p+1}X'\sqcup \Delta_{k-p}Y'\sqcup Z\longrightarrow X'\sqcup \Delta_{k-p-q}Y'\sqcup \Delta_k^{k-p+1-q} Z.\]
We will talk about the {\em $(p,q)$-Tower of Hanoi} for the Tower of Hanoi puzzle in which only $(p,q)$-moves are allowed.
\end{definition}

Note that the $(1,0)$ case is the classical puzzle, and that the $(1,1)$ one is the Tower of Hanoi-Fibonacci puzzle.

\begin{theorem}\label{Optimalpq} The $(p,q)$-Tower of Hanoi puzzle with $n$ disks admits a solution for any $n\geqslant 0$. There exists only one optimal algorithm for it, that needs exactly $m_n$ $(p,q)$-moves, where the sequence $(m_n)_{n\in\Z}$ is defined by 
\[m_n=\left\{\begin{array}{cl}0&\mbox{for $n\leqslant 0$}\\ m_{n-p}+m_{n-p-q}+1&\mbox{for $n>0$.}\end{array}\right.\]
\end{theorem}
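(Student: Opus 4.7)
The plan is to adapt the proof of Theorem \ref{OptimalFibo} almost verbatim, replacing the implicit shifts $1$ and $2$ by $p$ and $p+q$; the argument proceeds by strong induction on $n$. The base cases $n\leqslant 0$ are trivial (both sides vanish), and for $0<n<p$ no $(p,q)$-move exists, so the interesting range is $n\geqslant p$; small cases $p\leqslant n\leqslant p+q$ can be checked directly (for instance, when $n=p$ the only move available is a $(p,q)$-move sending $\Delta_n$ directly to $C$, matching $m_p=m_0+m_{-q}+1=1$).

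The inductive step rests on a structural observation: because $d_n$ is the largest disk, it can only be moved through an $n$-move (the case $k=n$ of the definition), and the shape of that definition forces the state just before any $n$-move to be, up to a permutation of the three pegs,
\[(\Delta_n^{n-p+1},\Delta_{n-p},\varnothing).\]
Indeed, on the peg carrying $\Delta_n^{n-p+1}$ nothing can lie below $d_n$, so $X'=\varnothing$; the remaining $n-p$ disks must form the top block $\Delta_{n-p}$ on a single other peg, forcing $Y'=\varnothing$; and then no disk is left for the third peg.

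From this I would extract the recursive decomposition of an optimal algorithm that moves $d_n$ exactly once, namely
\[(\Delta_n,\varnothing,\varnothing)\stackrel{m_{n-p}}{\longrightarrow}(\Delta_n^{n-p+1},\Delta_{n-p},\varnothing)\stackrel{1}{\longrightarrow}(\varnothing,\Delta_{n-p-q},\Delta_n^{n-p-q+1})\stackrel{m_{n-p-q}}{\longrightarrow}(\varnothing,\varnothing,\Delta_n).\]
The first arrow is the $(p,q)$-puzzle shifting $\Delta_{n-p}$ from $A$ (on top of the immobile block $\Delta_n^{n-p+1}$) to $B$, costing $m_{n-p}$ moves uniquely by induction; the middle arrow is the forced $n$-move; the last arrow is the $(p,q)$-puzzle shifting $\Delta_{n-p-q}$ from $B$ to $C$ (with the block $\Delta_n^{n-p-q+1}$ already in place on $C$ playing no active role), costing $m_{n-p-q}$ moves uniquely by induction. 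Summing yields the upper bound $m_n\leqslant m_{n-p}+m_{n-p-q}+1$.

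The main obstacle, as usual in Hanoi-type arguments, is the matching lower bound together with uniqueness: I must rule out algorithms that move $d_n$ more than once. My plan is to argue that any valid algorithm must first set up one of the six configurations of the forced form above, which already costs at least $m_{n-p}$ moves by the induction hypothesis; then either the first $n$-move already lands $d_n$ on $C$ (and the remaining work is the $(p,q)$-puzzle on $\Delta_{n-p-q}$, costing at least $m_{n-p-q}$ moves), or $d_n$ ends up on the ``wrong'' peg and must be moved again — and between two consecutive $n$-moves the tower $\Delta_{n-p}$ has to be reassembled, costing at least $m_{n-p-q}$ additional moves on its own. A direct accounting then shows that any algorithm moving $d_n$ more than once is strictly suboptimal, giving $m_n=m_{n-p}+m_{n-p-q}+1$ and establishing uniqueness of the optimal algorithm.
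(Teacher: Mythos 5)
Your overall strategy is the same as the paper's: you identify the forced pre-move state $(\Delta_n^{n-p+1},\Delta_{n-p},\varnothing)$ (up to a permutation of pegs), use the three-stage decomposition with costs $m_{n-p}$, $1$, $m_{n-p-q}$, and conclude by induction; your sketch of the lower bound and uniqueness is in fact more explicit than what the paper writes. However, there is a genuine error in your treatment of the small cases: you claim that for $0<n<p$ no $(p,q)$-move exists. Under the paper's conventions ($\Delta_k^{k'}=\Delta_k$ for $k'\leqslant 1$ and $\Delta_j=\varnothing$ for $j<1$), a $k$-move with $k\leqslant p$ only requires the block $\Delta_k^{k-p+1}=\Delta_k$ to be on top of $X$ and imposes no condition on $Y$ (since $\Delta_{k-p}=\varnothing$), so the single move $(\Delta_n,\varnothing,\varnothing)\longrightarrow(\varnothing,\varnothing,\Delta_n)$ is legal for every $n\leqslant p$, giving $m_n=1$ there --- exactly what the recurrence asserts, since $m_{n-p}+m_{n-p-q}+1=0+0+1=1$ for $0<n\leqslant p$. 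If your reading were correct, the puzzle with $0<n<p$ disks would be unsolvable, contradicting the very statement you are proving (solvability for all $n\geqslant 0$ and $m_n=1$ in that range).

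Moreover this is not a cosmetic slip confined to tiny $n$: whenever $p\geqslant 2$ the induction itself produces sub-towers of fewer than $p$ disks (already for $p<n<2p$ the first stage must transfer the tower $\Delta_{n-p}$ with $n-p<p$ disks), and it is precisely the degenerate one-move transfer above that makes those stages possible and gives them cost $m_{n-p}=1$; under your reading the decomposition could not even be carried out in such cases. Replace your base-case claim by the paper's observation that $m_n=1$ for all $0<n\leqslant p$; with that correction the rest of your argument goes through and coincides with the paper's proof (whose own justification of optimality and uniqueness is no more detailed than your sketch).
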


\begin{proof} For $n\leq p$, the $(p,q)$-move $(\Delta_n,\varnothing,\varnothing)\longrightarrow(\varnothing,\varnothing,\Delta_n)$ is allowed, so we have $m_n=1$ for any $n\leqslant p$, which correspond to the formula stated in the theorem.

For $n>p$, the optimal solution is provided by the sequence of critical states, each of which needing, by induction, the number of moves written on its arrow:
\[(\Delta_n,\varnothing,\varnothing)\stackrel{m_{n-p}}{\longrightarrow}(\Delta_n^{n-p+1},\Delta_{n-p},\varnothing)\stackrel{1}{\longrightarrow}(\varnothing,\Delta_{n-p-q},\Delta_n^{n-p+1-q})\stackrel{m_{n-p-q}}{\longrightarrow}(\varnothing,\varnothing,\Delta_n).\]
By summing the moves we get the expected formula. %Note that the only difference with the case $n\leqslant p$ is for the move from $\Delta_n^{n-p+1}\sqcup\Delta_{n-p}\sqcup\varnothing$ to $\varnothing\sqcup\Delta_{n-p-q}\sqcup\Delta_n^{n-p+1-q}$, which does not exist when $n<p$ since we then have $\Delta_n^{n-p+1}=\varnothing$.
\end{proof}

There is no serious doubt that generalization of Zeckendorf-Fibonacci, Gray-like codes and pseudo-edges of the graph $\mathcal{H}_n$ can be given for $(p,q)$-moves, but some additional technicalities may be quite hard to overcome. For example, the case $p=q=2$ provides a sequence $m_n$ which is not strictly increasing (since $m_{2n-1}=m_{2n}$), hence a convenient numeration system derived from it is probably not as simple as the Zeckendorf one for the Fibonacci sequence of the case $p=q=1$. The study of the corresponding graph may be a little bit tricky as well to be extended to $(p,q)$-moves.

\subsection{Restricting the moves between pegs}\label{Tribo}

Possible variants on the classical puzzle consist in allowing moves only between some pegs. For example, in the {\em clockwise-cyclic} variant introduced in \cite{Atkinson1981}, additionally to the classical rules of Section \ref{HanoiClassique}, a disk can move only from $A$ to $B$, from $B$ to $C$ or from $C$ to $A$.

Any variant of this type can be defined by an oriented graph with set of vertices $\{A,B,C\}$, an arrow $XY$ standing for the fact that moves from the peg $X$ to the peg $Y$ are allowed. The sensible variants of this kind (i.e. for which the puzzle is solvable for any $n$) are the ones for which the corresponding graph is strongly connected \cite[Theorem 8.4]{Hinz2019}. We will not consider all possible cases here, but only mention briefly the  {\em linear} variant, in which the allowed moves are those from $A$ to $B$, from $B$ to $A$, from $B$ to $C$ and from $C$ to $B$. It is well-known that, for such a restriction, the optimal algorithm for the classical puzzle needs $3^n-1$ moves, so, since the number of distincts states is $3^n$, the linear puzzle also provides the ``worst solution'', that is: the longest solution that does not come back to any state already met.

Now, consider the linear variant for the Tower of Hanoi-Fibonacci, in which a $k$-Fibonacci move is allowed iff it makes $d_k$ going from $A$ to $B$, from $B$ to $A$, from $B$ to $C$ or from $C$ to $B$. Write again $m_n$ for the minimal number of moves to solve ths variant with $n$ disks. The optimal solution is then given by the following recursive description (for $n\geqslant 3$):
\[(\Delta_n,\varnothing, \varnothing)\stackrel{m_{n-1}}{\longrightarrow}(n,\varnothing,\Delta_{n-1})\stackrel{1}{\longrightarrow}(\varnothing,\Delta_n^{n-1},\Delta_{n-2})\stackrel{1}{\longrightarrow}\]
\[(\Delta_{n-1}^{n-2},n,\Delta_{n-3})\stackrel{m_{n-3}}{\longrightarrow}(\Delta_{n-1},n,\varnothing)\stackrel{1}{\longrightarrow}(\Delta_{n-2},\varnothing,\Delta_n^{n-1})\stackrel{m_{n-2}}{\longrightarrow}(\varnothing,\varnothing,\Delta_n).\]

Hence, the sequence $(m_n)_n$ is given by $m_0=0$, $m_1=2$, $m_2=5$ and, for any $n\geqslant 3$, $m_n=m_{n-1}+m_{n-2}+m_{n-3}+3$ (a kind of a {\em Tribonacci} sequence).

As regards the other variants derived from the restriction of moves between pegs, there is probably no specific difficulty to deal with them in the context of Fibonacci moves (or $(p,q)$-moves), apart from the fact that some of these variants already involve linear recurring sequences of order $6$ in the classical Tower of Hanoi, so are possibly tiresome to describe in our even more technical context.

More interesting would be to find a general way to derive the sequence of moves (or at least of number of moves) from the conjunction of the two kinds of rules. For example, is it possible to deduce the previous Tribonacci sequence directly from what we separately know from the linear variant of the classical puzzle and from our study of the Tower of Hanoi-Fibonacci, instead of the recursive description we presented?

\subsection{Further questions}\label{Questions}

We could also consider even more general rules for the moves. For example, we could allow moves of the form $kX'\sqcup\Delta_{k-1}Y'\sqcup Z\longrightarrow X'\sqcup\Delta_{n-3}(n-1)Y'\sqcup(n-2)nZ$, and so on. One may wonder if two different rules can lead to the same sequence, hence asking for the links between these rules.

Eventually, a deeper work would be to obtain a theoretical way to find from a linear recurring sequence some natural rules for the Tower of Hanoi for which the number of moves of the optimal algorithm would be given by the sequence. This will probably involve a more precise definition of a ``natural rule''. (For example, we may ask whether we can always restricts the study to markovian moves, i.e. moves for which their legality depends only on the initial and final states.) In a sense, answering this question would truly complete Lucas' original assertion.

\medskip

\noindent MSC2020: 00A08, 05C20, 11B39, 68R15, 94B25.

\end{document}